\newcommand{\msize}[1]{{\left|#1\right|}}
\newcommand{\calP}{{\cal P}}
\newcommand{\ptitle}[1]{\medskip\noindent\hspace*{2mm}\textsc{\underline{#1}}} 
\newtheorem{obs}[theorem]{Observation}
\newenvironment{listing}[1]{%
        \begin{list}{*}{%
                 \settowidth{\labelwidth}{#1}%
                 \setlength{\leftmargin}{\labelwidth}%
                 \advance \leftmargin by 12pt
                   \setlength{\itemsep}{0pt}%
                   \setlength{\parsep}{0pt}%
                   \setlength{\topsep}{0pt}%
                   \setlength{\parskip}{0pt}%
}%
}{%
\end{list}}
\begin{document}
%
\title{Computational Complexity of Jumping Block Puzzles}
%
%
\author{Masaaki Kanzaki\inst{1} \and
  Yota Otachi\inst{2}\orcidID{0000-0002-0087-853X} \and
  Ryuhei Uehara\inst{1}\orcidID{0000-0003-0895-3765}}
\authorrunning{M. Kanzaki et al.}
%
\institute{
School of Information Science, Japan Advanced Institute of Science and Technology.
Asahidai, Nomi, Ishikawa 923-1292, Japan \email{\{kanzaki,uehara\}@jaist.ac.jp}
\and
Department of Mathematical Informatics, Graduate School of Informatics, Nagoya University.
Furocho, Chikusa-ku, Nagoya, Aichi 464-8601, Japan
\email{otachi@nagoya-u.jp}}
\maketitle              
\begin{abstract}
  In combinatorial reconfiguration, the reconfiguration problems on
  a vertex subset (e.g., an independent set) are well investigated.
  In these problems, 
 some tokens are placed on a subset of vertices of the graph,
 and there are three natural
  reconfiguration rules called ``token sliding,'' ``token jumping,'' and ``token addition and removal''.
  In the context of computational complexity of puzzles, the sliding block puzzles play an important role.
  Depending on the rules and set of pieces, the sliding block puzzles 
 characterize the computational complexity classes including P, NP, and PSPACE.
  The sliding block puzzles correspond to the token sliding model in the context of combinatorial reconfiguration.
  On the other hand, a relatively new notion of jumping block puzzles is proposed in puzzle society.
 This is the counterpart to the token jumping model of the
  combinatorial reconfiguration problems in the context of block puzzles.
  We investigate several variants of jumping block puzzles and determine their computational complexities.

\keywords{Combinatorial reconfiguration \and Computational complexity
  \and Jumping block puzzle \and Sliding block puzzle \and Token jumping.}
\end{abstract}
\section{Introduction}
\label{sec:intro}

Recently, the \emph{reconfiguration problems} attracted the attention in theoretical computer science.
These problems arise when we need to find a step-by-step transformation between two feasible solutions of 
a problem such that all intermediate results are also feasible and each step abides by a fixed reconfiguration rule, 
that is, an adjacency relation defined on feasible solutions of the original problem.
The reconfiguration problems have been studied extensively for several well-known problems, including 
{\sc independent set}~\cite{IDHPSUU,KaminskiMedvedevMilanic2012}, 
{\sc satisfiability}~\cite{Kolaitis,MTY11}, {\sc set cover}, {\sc clique}, {\sc matching}~\cite{IDHPSUU}, 
{\sc vertex coloring}~\cite{BJLPP11,BC09,CHJ11}, and {\sc shortest path}~\cite{KMP11}.

In the reconfiguration problems that consist in transforming a vertex subset (e.g., an independent set),
there are three natural reconfiguration rules called
``token sliding,'' ``token jumping,'' and ``token addition and removal'' \cite{KaminskiMedvedevMilanic2012}.
In these rules, a vertex subset is represented by the set of \emph{tokens} placed on
each of the vertices in the set. 
In the token sliding model, we can slide a token on a vertex to another along an edge joining these vertices.
On the other hand, a token can jump to any other vertex in the token jumping model.

\begin{figure}\centering
\includegraphics[bb=0 0 570 645,width=0.2\linewidth]{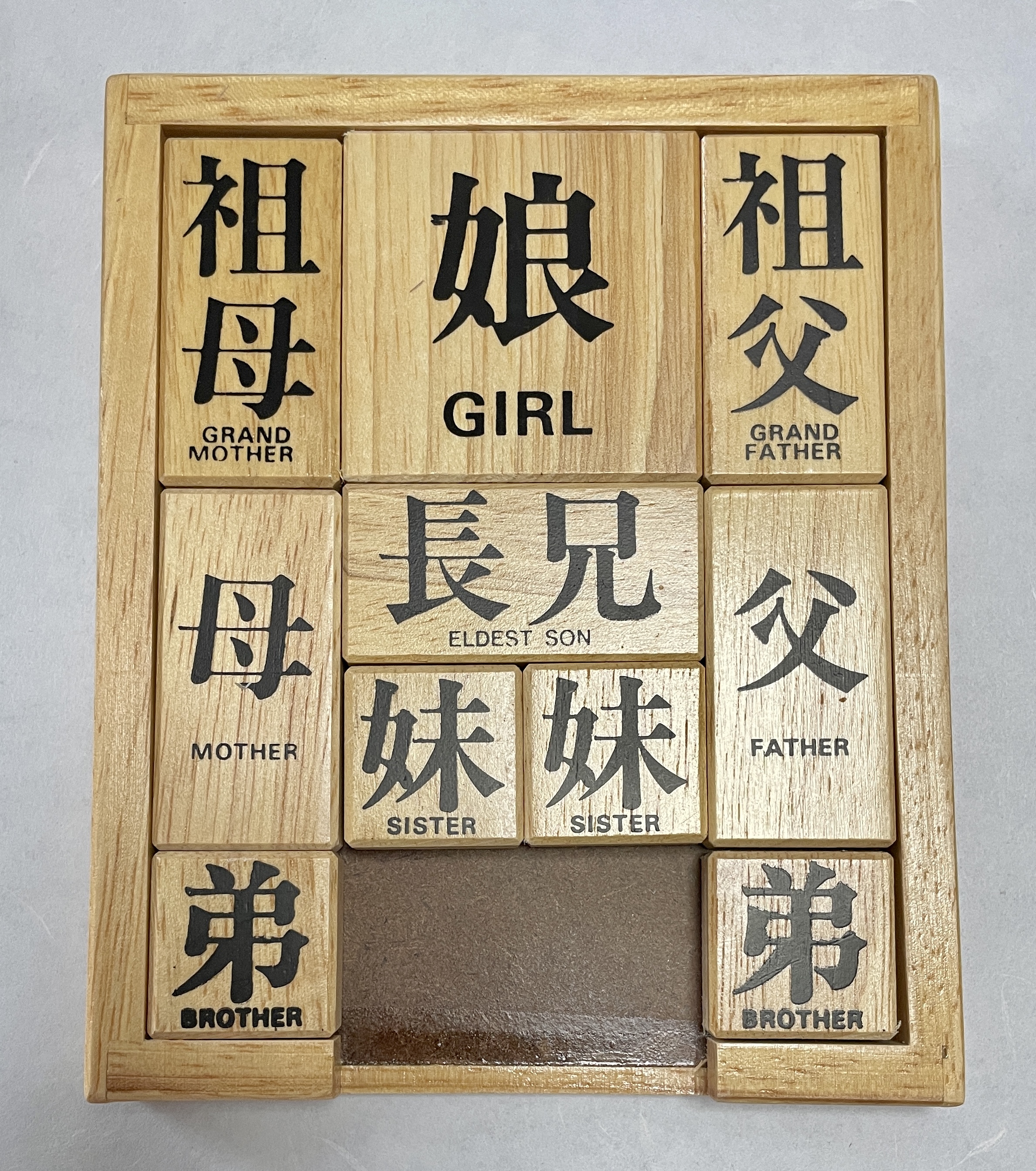}\hspace{1cm}
\includegraphics[bb=0 0 571 553,width=0.25\linewidth]{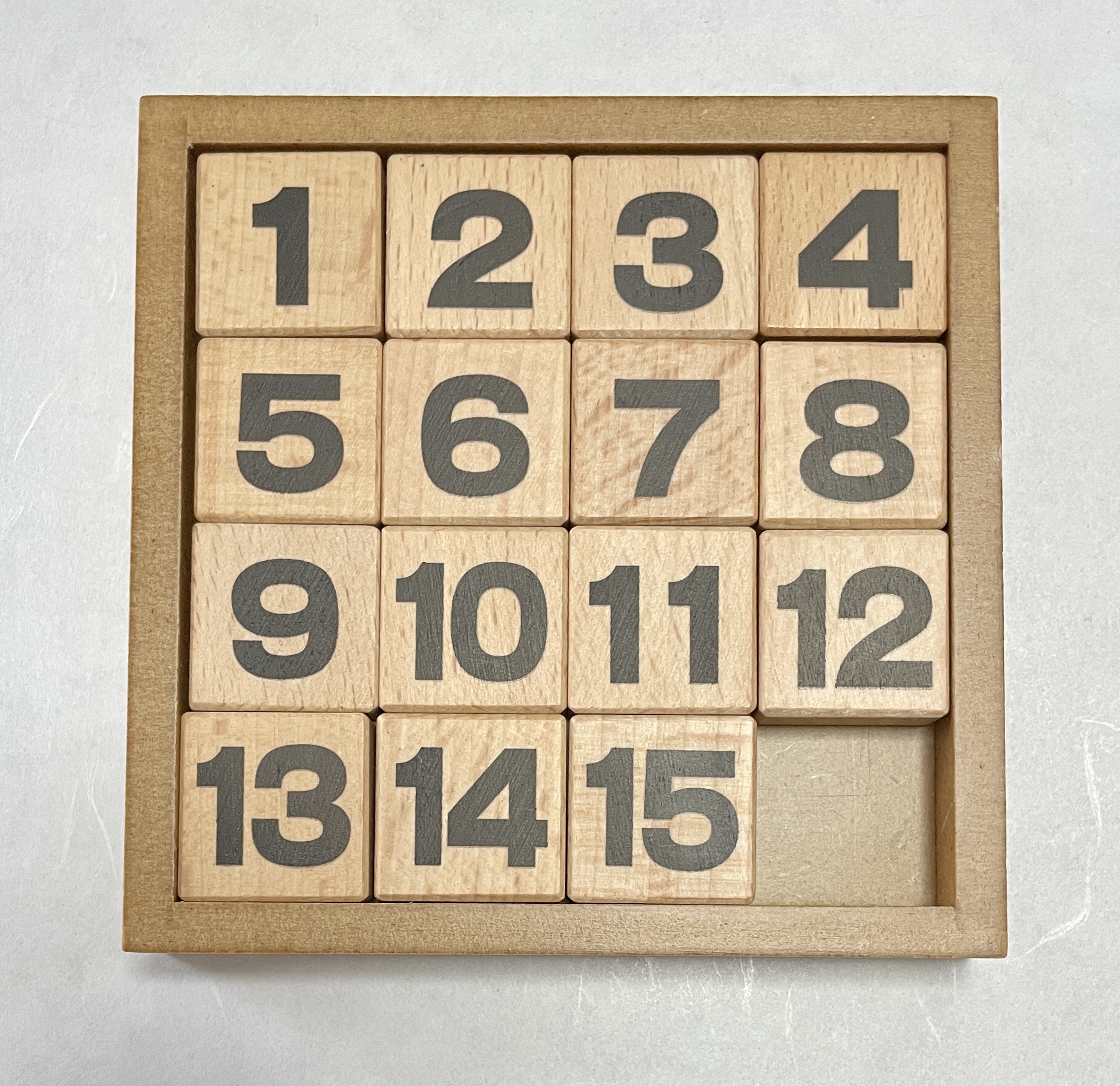}
\caption{A typical sliding block puzzle and the 15-puzzle.}
\label{fig:dad}
\end{figure}

In the puzzle society, tons of puzzles have been invented which can be seen as realizations of some reconfiguration problems.
Among them, the family of \emph{sliding block puzzles} (\figurename~\ref{fig:dad})
has been playing an important role bridging recreational mathematics and theoretical computer science.
A classic puzzle is called the Dad puzzle; it consists of rectangle pieces in a rectangle frame, and
the goal is to slide a specific piece (e.g., the largest one) to the goal position.
The Dad puzzle was invented in 1909 and the computational complexity of this puzzle was open since Martin Gardner mentioned it.
After almost four decades,
Hearn and Demaine proved that the puzzle is PSPACE-complete in general (a comprehensive survey can be found in \cite{HearnDemaine2009}).
When all pieces are unit squares, we obtain another famous puzzle called the \emph{15-puzzle}; in this puzzle,
we slide each unit square using an empty area and arrange the pieces in order.
From the viewpoint of combinatorial reconfiguration, this puzzle has remarkable properties in the general form of the $(n^2-1)$-puzzle.
For given initial and final arrangements of pieces, the decision problem asks 
if we can transform from the initial arrangement to the final arrangement.
Then the decision problem can be solved in linear time, and for a yes-instance, 
while a feasible solution can be found in $O(n^3)$ time, 
finding a shortest solution is NP-complete (see \cite{RW90,DemaineRudoy2018}).

To see how the computational complexity of a reconfiguration problem depends on its reconfiguration rule,
it is natural to consider the ``jumping block'' variant as a counterpart of sliding block puzzles.
In fact, in the puzzle society, some realizations of \emph{jumping block puzzles} have been invented.
As far as the authors know, ``Flying Block'' was the first jumping block puzzle, which was designed by Dries de Clercq
and popularized at International Puzzle Party by Dirk Weber in 2008.\footnote{You can find
``Flying Block'', ``Flying Block II'', and ``Flying Block III'' at \url{http://www.robspuzzlepage.com/sliding.htm} (accessed in June 2021).}
The Flying Block consists of four polyominoes (see \cite{Gol} for the notion of polyominoes) within a rectangle frame.
The goal is similar to the Dad puzzle; moving a specific piece to the goal position.
In one move, we first pick up a piece, rotate it if necessary, and then put it back into the frame.
A key feature is that each piece has a small tab for picking up.
Because of this feature, flipping a piece is inhibited when we put it into the frame.
In this framework, several puzzles were designed by Hideyuki Ueyama,
and some of them can be found in \cite{Akiyama} (see Concluding Remarks for further details).

\begin{figure}
\centering
\includegraphics[width=0.75\linewidth]{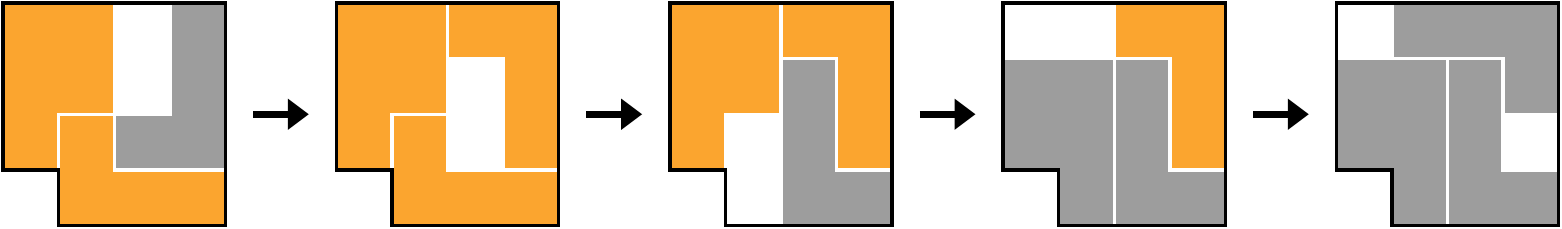}
\caption{An example of the flip over puzzle. Each piece has its front side and back side, and the goal is to make them back-side up.}
\label{fig:sample}
\end{figure}

Later, Fujio Adachi invented another jumping block puzzle
which is called \emph{Flip Over} (or Turn Over). It was invented in 2016
and popularized at International Puzzle Party by Naoyuki Iwase in 2019.\footnote{Some commercial products
can be found at \url{http://www.puzzlein.com/} (in Japanese; accessed in June 2021).}
This puzzle consists of four polyominoes in an orthogonal frame, which is not a rectangle.
The operation is a bit different from Flying Block;
you can \emph{flip} the piece in addition to the rotation if you like.
Each piece has its front side and back side (distinguished by their colors).
The goal of this puzzle is to make all pieces back-side up.
A simple example is given in \figurename~\ref{fig:sample}.

As far as the authors know, the jumping block puzzles have never been investigated in
the context of the reconfiguration problems.
In this paper, we first investigate the jumping block puzzle under the model of Flip Over.
That is, the \emph{flip over puzzle} problem is formalized as follows:

\ptitle{Flip over puzzle}
\begin{listing}{aaa}
\item[{\bf Input:}] A set of polyominoes in an orthogonal frame.
\item[{\bf Operation:}] Pick up a piece, rotate and flip it if desired, and put it back into the frame.
\item[{\bf Goal:}] To make every piece back-side up in the frame.
\end{listing}
\medskip

We first observe that each piece can be flipped in-place if it is line symmetric.
That is, when all polyominoes are line symmetric,
it is a yes-instance and all pieces can be flipped in a trivial (shortest) way.
In contrast with that, the existence of one asymmetric piece changes the computational complexity of this puzzle.
The first result in this paper is the following theorem:
\begin{theorem}
\label{th:main}
The flip over puzzle is PSPACE-complete even with all the following conditions:
(1) the frame is a rectangle without holes,
(2) every line symmetric piece is a rectangle of size $1\times k$ with $1\le k\le 3$,
(3) there is only one asymmetric piece of size 4, and
(4) there is only one vacant unit square.
\end{theorem}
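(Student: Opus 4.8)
The plan is to prove both directions of the claimed completeness. Membership in PSPACE is routine: a configuration is the placement and orientation of every piece together with the location of the single vacant square, which is encodable in space polynomial in the board size, and each legal pick-up/rotate/flip/replace move is checkable in polynomial time. Since the goal is reachability of a target configuration in the (exponentially large but polynomially describable) configuration graph, the problem lies in NPSPACE, hence in PSPACE by Savitch's theorem. The substance of the theorem is therefore the PSPACE-hardness under the four stated restrictions.

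For hardness I would reduce from Nondeterministic Constraint Logic (NCL), the framework Hearn and Demaine use to establish PSPACE-completeness of sliding-block puzzles; recall that deciding whether a designated edge of a constraint graph can ever be reversed through legal single-edge reversals is PSPACE-complete. I would realize the constraint graph on the rectangular board so that edges become width-one straight corridors, with the orientation of an edge encoded by the position of a small token (a $1\times 1$ or $1\times 2$ piece) inside its corridor, and so that AND- and OR-vertex gadgets are small sub-boards whose admissible simultaneous token positions match exactly the NCL inflow constraints. The unique vacant unit square plays the role of a moving cursor that allows one token to shift at a time, so that a single productive puzzle move corresponds to reversing one NCL edge.

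The goal of the puzzle—making every piece back-side up—is where the unique asymmetric piece enters. Since every $1\times k$ piece is line symmetric it is flipped in place by a single move, so the only genuine obstruction is the lone asymmetric tetromino. I would place it in a dedicated flip chamber whose geometry admits a flip-and-replace move only after a particular token configuration, namely the NCL target edge having been reversed, has cleared the surrounding corridor. The asymmetric piece then becomes flippable if and only if the target edge can be reversed, so the puzzle is a yes-instance exactly when the NCL instance is. All gadgets use only $1\times k$ rectangles with $k\le 3$ together with one size-$4$ asymmetric piece, the board is a rectangle without holes, and exactly one cell is vacant, matching conditions (1)--(4).

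The main obstacle will be taming the extra freedom of the jumping model. Unlike sliding, picking up a $1\times 1$ piece and dropping it onto the pre-existing vacant cell effectively lets the vacant square swap with an arbitrary unit piece anywhere on the board, which could short-circuit the corridor logic, and picking up a $1\times 2$ or $1\times 3$ piece momentarily exposes several empty cells, potentially enabling unintended placements. I would neutralize this by confining every mobile token to a width-one corridor walled by $1\times 3$ pieces positioned so that the single vacant cell can never sit as a collinear neighbor into which a wall piece could shift, rendering the walls immobile; then the only placements that fit reproduce unit slides and the jumping dynamics collapses onto the sliding dynamics the NCL gadgets require. Verifying that no sequence of pick-up/rotate/flip/replace moves escapes these confinements—in particular that the wall pieces are genuinely immobile and that the asymmetric piece cannot be flipped prematurely by exploiting transient empty space—is the crux of the argument and the step I expect to demand the most care.
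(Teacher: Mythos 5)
Your plan matches the paper's proof essentially step for step: a reduction from NCL using the Hearn--Demaine cell-and-gadget framework, edge orientations encoded by positions of small pieces in corridors, the lone asymmetric L-tetromino serving as a lock that becomes flippable exactly when the target edge $e_t$ can be reversed, and the unique vacant square acting as a roving cursor (the paper, like you, observes that every $1\times k$ rectangle flips in place, so only the L-tetromino matters). The one detail to repair is your tentative choice of a $1\times 1$ edge token: as you yourself note, a monomino can always jump to the unique vacant cell wherever it happens to be, so it cannot carry state; the paper instead encodes edge orientation with $1\times 3$ I-trominoes and uses monominoes only as inert filler, turning their teleportation from a threat into the very mechanism that delivers the vacant square to wherever it is needed.
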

We note that all polyominoes of size at most 3 are line symmetric 
(i.e., monomino, domino, I-tromino, and L-tromino in terms of polyomino).
Therefore, Theorem \ref{th:main} is tight since it contains only one minimum asymmetric piece.

We show Theorem \ref{th:main} by a reduction from the Nondeterministic Constraint Logic (NCL).
It is known that the NCL is PSPACE-complete even if the given input NCL graph has constant bandwidth \cite{Zanden}.
Using the result in \cite{Zanden}, 
we will have the claim in Theorem \ref{th:main} even if the frame is a rectangle of constant width.
On the other hand, if the frame is a rectangle of size $1\times m$, the problem is trivial:
each piece should be a rectangle of width 1 and hence it can be flipped to back side in-place in one step.
Interestingly, the problem is intractable when the frame has width 2.
\begin{theorem}
\label{th:2np}
The flip over puzzle is NP-hard even with all the following conditions:
(1) the frame is a rectangle of width $2$ without holes, 
(2) every line symmetric piece is a rectangle of width 1, and 
(3) there is only one asymmetric piece.
\end{theorem}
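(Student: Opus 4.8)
The plan is to prove NP-hardness by a polynomial-time reduction from \textsc{3-Partition}, which is strongly NP-complete and may therefore be assumed to come with all integers encoded in unary; this lets us use bar lengths polynomial in the frame size. Recall that in \textsc{3-Partition} we are given $3m$ positive integers $a_1,\dots,a_{3m}$ and a bound $B$ with $\sum_i a_i = mB$ and $B/4 < a_i < B/2$, and we must decide whether the $a_i$ can be split into $m$ triples each summing to $B$. The first thing I would exploit is the observation already recorded in the paper: since every line symmetric piece is a $1\times k$ rectangle, it can be flipped in place in a single move. Hence in our instances the only piece whose state is ever in question is the unique asymmetric piece $P$, and the whole puzzle is a yes-instance if and only if $P$ can be turned back-side up at some moment. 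In a width-$2$ frame the back side of $P$ can be seated only in a region of very restricted shape (an L-shaped tetromino pattern inside a $2\times 3$ window), so the reduction aims to make the existence of such an empty region equivalent to a correct $3$-partition.

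For the construction I would build a $2\times W$ frame and encode each integer $a_i$ as a horizontal $1\times a_i$ bar that must eventually lie in one row. Additional full-height $2\times 1$ bars serve as movable separators that, together with $P$, carve the strip into $m$ consecutive ``rooms'' whose usable width is exactly $B$. The total free area is set so tight that the only way to expose the L-shaped hole needed to put $P$ back-side up is to first pack every horizontal bar completely into the rooms; because each $a_i$ satisfies $B/4 < a_i < B/2$, a room of capacity $B$ holds exactly three bars, so a legal full packing corresponds precisely to a partition of the $a_i$ into triples of sum $B$. I would arrange the geometry so that one distinguished room is the flipping chamber for $P$, and $P$ can be flipped there exactly when all remaining rooms are saturated and the chamber has been cleared.

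The backward direction (a $3$-partition exists $\Rightarrow$ the puzzle is solvable) I would establish by an explicit schedule: move each triple into its room using the vacant cells, place the separators, clear the chamber, flip $P$, and finally flip every $1\times k$ bar in place. The main obstacle is the forward direction (puzzle solvable $\Rightarrow$ a $3$-partition exists). Because a jumping move may relocate a piece to \emph{any} spot where it fits, I must rule out ``cheating'' sequences that create the required hole through transient rearrangements rather than through an honest packing. The heart of the argument will be an invariant that bounds the free area and the reachable configurations: I would show that the empty space is too small ever to carry a separator past $P$ or to host a horizontal bar outside a room, so that whenever a valid flip of $P$ first becomes available the number-bars must occupy the rooms with each room exactly full, forcing the triple partition. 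Establishing this rigidity — essentially a careful area- and connectivity-accounting in the tight width-$2$ strip, together with a proof that saturation is a prerequisite for clearing the chamber — is where I expect most of the difficulty to lie.
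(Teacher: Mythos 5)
You have the right source problem (\textsc{3-Partition}), the right high-level shape (bars of length $a_i$, bins of capacity $B$, and the observation that only the unique asymmetric piece matters since width-1 rectangles flip in place), and the right split into an easy backward direction and a hard forward direction. But the enforcement mechanism you propose is different from the paper's and, as sketched, has a genuine gap exactly where you flag the difficulty. The paper does not use a small asymmetric piece plus movable separators: it makes the \emph{single asymmetric piece itself} enormous, covering almost the entire $2\times(2m(B+1))$ frame, with the $m$ bins of size $1\times B$ carved into that piece as internal voids on its left side (plus one $1\times\max_i a_i$ working space at the top left and one $1\times m(B+1)$ void on the right holding the $3m$ bars). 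A horizontal flip of this giant piece carries its left-side voids to the right side, so the flip is geometrically feasible if and only if the $3m$ bars currently on the right (total length exactly $mB$) sit precisely inside the $m$ flipped voids of size $1\times B$ --- the bin structure is immovable because it is part of the flipping piece, and the flip \emph{is} the packing test. No separate invariant or area-accounting argument is needed beyond noting that the $1\times B$ pieces cannot relocate.

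Your substitute --- a small asymmetric piece, a distinguished flipping chamber, and $2\times 1$ separators defining rooms --- leaves two concrete holes. First, under jumping moves the separators are just dominoes that can teleport to any free $1\times 2$ spot, so the room boundaries are not rigid: a configuration in which the non-chamber area is fully packed need not respect your intended partition into width-$B$ rooms (separators can be re-seated to carve bins of other sizes, or laid horizontally inside a row), and then ``fully packed'' no longer implies a 3-partition. Your proposed invariant (``the empty space is too small ever to carry a separator past $P$'') is a sliding-style argument; with jumping there is no carrying, only a destination test, so it does not rule this out. Second, you assert the mirror image of $P$ can be seated only in the distinguished chamber, but $P$ can jump to \emph{any} congruent free region anywhere in the strip, so you would additionally have to prove the free space never forms such a region elsewhere during any legal play. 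Both obstructions are precisely what the paper's giant-piece trick eliminates, and neither is addressed in your sketch, so the forward direction of your reduction is not established.
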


In our reduction for proving Theorem \ref{th:2np},
the constructed instances admit a sequence of flips, if any exists, of length polynomial in $n$.
However, we do not know whether the flip over puzzle on a frame of width $w$ is in NP or not for some small constant $w \ge 2$.
On the other hand, when the number of flips of each piece is bounded above by a constant,
we can observe that this problem is in NP.
Under this assumption, we show that the flip over puzzle is NP-complete even if we have some combination of natural conditions.

Next we turn to the \emph{flying block puzzle} problem, which is formalized as follows:

\ptitle{Flying block puzzle}
\begin{listing}{aaa}
\item[{\bf Input:}] A set of polyominoes in an orthogonal frame, a specific piece $P$ in the set, and a goal position of $P$ in the frame.
\item[{\bf Operation:}] Pick up a piece, rotate it if desired, and put it back into the frame.
\item[{\bf Goal:}] To move $P$ to the goal position.
\end{listing}
\medskip

In our results of the flip over puzzle, the unique asymmetric piece plays an important role
and flips of rectangles are not essential.
In some cases, we can show corresponding results for the flying block puzzle 
by modifying the unique asymmetric piece in the flip over puzzle.
Using the idea, we first show natural counterparts of Theorems \ref{th:main} and \ref{th:2np}.
However, while the goal of the flip over puzzle is to flip \emph{all} pieces,
the goal of the flying block puzzle is to arrange the \emph{specific} piece to the goal position.
This difference requires different techniques, and some counterparts of
the NP-completeness results of the flip over puzzle remain open.
Intuitively, throughout these counterparts, the flying block puzzle
seems to be more difficult than the flip over puzzle since the hardness results hold under stronger restrictions.
In fact, the computational complexity of the flying block puzzle is not trivial even if the frame is a rectangle of width 1,
while it is trivial in the flip over puzzle since all pieces are rectangles of width 1.
We show weakly NP-completeness of the flying block puzzle even if the frame is of width 1 and each piece can be
moved at most once. On the other hand, we show a nontrivial polynomial time algorithm
when we can move each piece any number of times in the frame of width 1.


\section{Preliminaries}
\label{sec:pre}

A \emph{polyomino} is a polygon formed by joining one or more unit squares edge to edge.
A polyomino is also called a \emph{$k$-omino} if it consists of $k$ unit squares.
When $k=1,2,3,4$, we sometimes call it \emph{monomino}, \emph{domino}, \emph{tromino}, and \emph{tetromino}, respectively.
An instance of the \emph{jumping block puzzle} consists of a set
$\calP$ of polyominoes $P_1,P_2,\ldots,P_n$ and a polyomino $F$.
Each polyomino $P_i\in \calP$ is called a \emph{piece}, and $F$ is called a \emph{frame}.
Each piece $P_i$ has its \emph{front side} and \emph{back side}
(in the figures in this paper, a bright color and a dark color indicate front side and back side, respectively).

A \emph{feasible packing} of $\calP$ to $F$ is an arrangement of all pieces of $\calP$ into $F$ such that
each piece is placed in $F$, no pair of pieces overlaps (except at edges and vertices),
every vertex of pieces of $\calP$ is placed on a grid point in $F$,
and each edge of pieces of $\calP$ is parallel or perpendicular to the edges of $F$.

For a feasible packing of $\calP$ to $F$, a \emph{flip} of $P_i$ is an operation that consists of the following steps:
(1) pick up $P_i$ from $F$, (2) translate, rotate, and flip $P_i$ if necessary, and
(3) put $P_i$ back into $F$ so that the resulting arrangement is
a feasible packing.
On the other hand, a \emph{fly} of $P_i$ is an operation that consists of the following steps:
(1) pick up $P_i$ from $F$, (2) translate and rotate $P_i$  if necessary, and
(3) put $P_i$ back into $F$ so that the resulting arrangement is a feasible packing.

For a given feasible packing $X$ of $\calP$ to $F$, the \emph{flip over puzzle} asks
whether $X$ can be reconfigured by a sequence of flips to a feasible packing in which
all pieces are back-side up in $F$.
In contrast, in the flying block puzzle, the input consists of three tuples;
a feasible packing of $\calP$ to $F$, a specific piece, say $P_n$, and the goal position of $P_n$ in $F$.
That is, the \emph{flying block puzzle} asks if
we can move the specific piece $P_n$ to the goal position starting from the feasible packing by a sequence of flies.

In order to determine the computational complexities of the jumping block puzzles,
we use the notion of the \emph{constraint logic} which was introduced 
by Hearn and Demaine \cite{HearnDemaine2005} (see also \cite{HearnDemaine2009}).
A \emph{constraint graph} $G=(V,E,w)$ is an edge-weighted undirected 3-regular graph such that
(1) each edge $e\in E$ is weighted $1$ or $2$,
(we sometimes describe the values 1 by \emph{red} and 2 by \emph{blue}, respectively) and
(2) each vertex is either an \emph{AND vertex} or an \emph{OR vertex} such that
(2a) an AND vertex is incident to two red edges and one blue edge, and
(2b) an OR vertex is incident to three blue edges.
A \emph{configuration} of a constraint graph is an orientation of the edges in the graph.
A configuration is \emph{legal} if the total weight of the edges pointing to each vertex is at least 2.
The problem \emph{NCL} on the constraint logic is defined as follows.

\newpage 

\ptitle{NCL}
\begin{listing}{aaa}
\item[{\bf Input:}] A constraint graph $G=(V,E,w)$, a legal configuration $C_0$ for $G$, and an edge $e_t\in E$.
\item[{\bf Question:}] Is there a sequence of legal configurations $(C_0,C_1,\ldots)$ such that
(1) $C_i$ is obtained by reversing the direction of a single edge in $C_{i-1}$ with $i\ge 1$, and
(2) the last configuration is obtained by reversing the direction of $e_t$?
\end{listing}
\medskip

The \emph{Bounded NCL} is a variant of the NCL that requires one additional restriction that every edge can be reversed at most once.
For these two problems, the following theorem is known:

\begin{theorem}[\cite{HearnDemaine2009}]
\label{th:NCL}
(1) NCL is PSPACE-complete even on planar graphs and
(2) Bounded NCL is NP-complete even on planar graphs.
\end{theorem}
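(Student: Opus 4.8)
The plan is to prove the two parts separately; in each case I would first argue membership in the stated complexity class (the easy direction) and then establish hardness by a reduction from a canonical complete problem, realizing every reduction with \emph{planar} constraint graphs by supplying a planar crossover gadget.

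For part (1), membership in PSPACE is immediate. A configuration is just an orientation of the $|E|$ edges, so it occupies $O(|E|)$ space, and legality is checked in polynomial time by summing the weights of incoming edges at each vertex. The question is reachability from $C_0$ to a configuration in which $e_t$ has been reversed, within the (exponentially large) graph whose vertices are legal configurations and whose edges are single edge reversals. This is solvable in nondeterministic polynomial space by guessing the reversal sequence one step at a time, and hence lies in PSPACE by Savitch's theorem. For PSPACE-hardness I would reduce from TQBF (true quantified Boolean formula). The heart of the construction is a small library of gadgets assembled from AND and OR vertices: a \emph{wire} that propagates an orientation constraint along a chain of degree-constrained vertices, the AND and OR behaviors that come for free from the two vertex types, and a planar \emph{crossover} gadget that lets two logical signals pass without interacting (this is exactly what upgrades the result to planar graphs). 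Using these, I would encode each quantified variable by a latch-like subgraph whose orientation represents a truth value, wire the variable subgraphs into a subgraph that evaluates the matrix of the formula, and arrange that $e_t$ becomes reversible precisely when the formula evaluates to true; the quantifier alternation is mirrored by the order in which variable orientations may be freely toggled while every intermediate configuration remains legal.

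Part (2) reuses the same gadget library but exploits the extra restriction that each edge is reversed at most once. This restriction yields membership in NP at once: any valid sequence has length at most $|E|$, so the sequence itself is a polynomial-size certificate verifiable in polynomial time. For NP-hardness I would reduce from planar 3-SAT (or from ordinary 3-SAT, planarizing with the crossover gadget). Each variable gadget is oriented once to fix a truth value, clause gadgets built from OR vertices detect satisfaction, and $e_t$ becomes reversible exactly when all clauses are satisfied. Because every edge may move only once, the construction cannot reuse a variable's ``energy,'' so it faithfully encodes a single static assignment rather than a dynamic search, which is precisely what distinguishes the NP setting from the PSPACE setting above.

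I expect the main obstacle, in both parts, to be gadget correctness: designing each gadget — and especially a correct planar crossover — and then proving rigorously that its set of legal single-edge reversals matches the intended logical semantics with no ``leaky'' extra moves, since a single unintended transition could let the reduction cheat. The quantifier-alternation encoding in part (1) and the once-only bookkeeping in part (2) are delicate for the same reason, but they reduce to verifying the local behavior of these gadgets, so once the gadget analysis is done carefully the global reductions should follow in a routine way.
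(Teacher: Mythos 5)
The paper does not prove this theorem at all; it imports it verbatim from Hearn and Demaine \cite{HearnDemaine2009}, and your outline---membership by Savitch for (1) and by the linear bound on sequence length for (2), PSPACE-hardness via a QBF reduction built from AND/OR/wire/crossover gadgets with quantifier gadgets enforcing alternation, and NP-hardness of the bounded variant via (planar) SAT---is precisely the route taken in that reference. Your sketch is structurally correct, with the caveat you yourself flag: essentially all of the work lies in constructing the gadgets (especially the planar crossover and the universal-quantifier gadget) and verifying that they admit no unintended reversals, none of which is carried out here.
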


\section{Flip over puzzles}
\label{sec:flip}
In this section, we focus on the flip over puzzles.
We first show that this puzzle is PSPACE-complete even on quite
restricted input. Next, we show that it is NP-hard even if the frame
$F$ is a rectangle of width 2. Lastly, we show that
it is NP-complete if the number of flips of each piece is $O(1)$.

\subsection{PSPACE-completeness in general}
\label{sec:pspace}

\begin{proof}
We give a polynomial-time reduction from the NCL problem to the flip over puzzle.
Let $G=(V,E,w)$, $e_t$, and $C_0$ be the input of the NCL problem.
That is, $G$ is an edge-weighted planar graph and $C_0$ is a legal configuration of $G$ and $e_t$ is an edge in $E$.
The framework of the reduction is similar to the reduction of the NCL
problem to the sliding block puzzle shown in \cite[Sec.~9.3]{HearnDemaine2009}:
The frame $F$ is a big rectangle, and it is filled with a regular grid of gate gadgets,
within a ``cell border'' construction. The internal construction of the gates is such that
none of the cell-border blocks may move, thus providing overall integrity to the configuration.
In our reduction, the cell border has width 2 (\figurename~\ref{fig:border}), 
and each gadget in a cell will be designed of size $13\times 13$.
We first prepare a sufficiently large cell border for embedding $G$.
Then we construct an instance of the flip over puzzle in three steps.

\begin{figure}\centering
\includegraphics[width=0.5\linewidth]{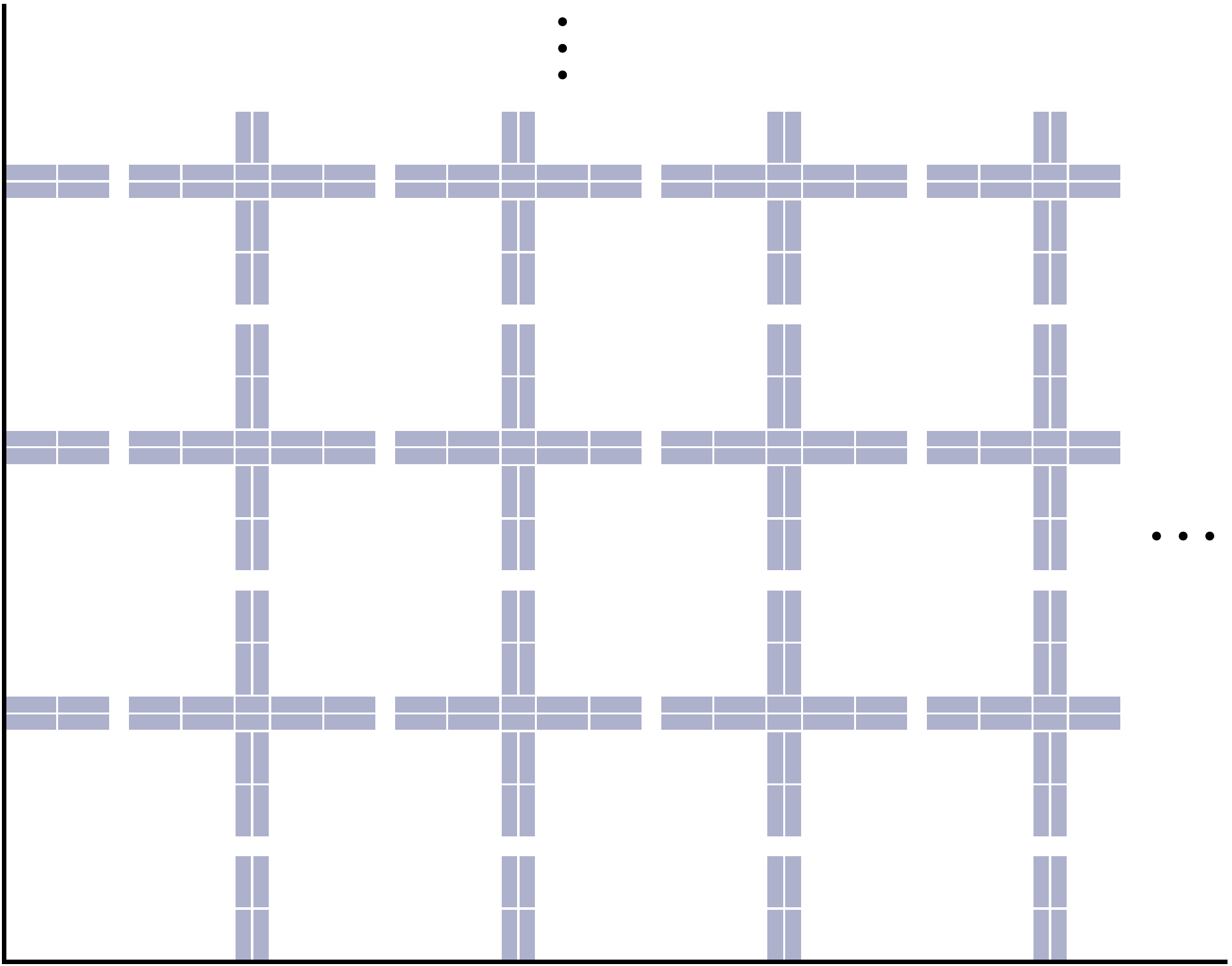}
\caption{The cell border, which is a framework of embedding gadgets.}
\label{fig:border}
\end{figure}

\begin{figure}\centering
\includegraphics[width=0.8\linewidth]{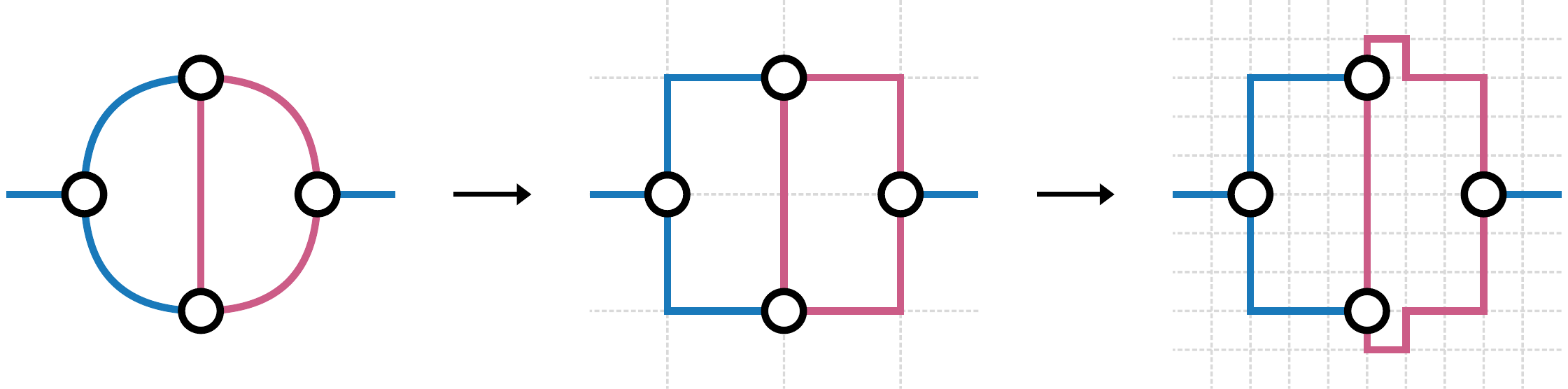}
\caption{A grid drawing of the graph in Step (1). Two red edges incident to an AND vertex are collinear.}
\label{fig:col}
\end{figure}

\paragraph{Step (1)}
We first compute a rectilinear embedding of $G$\footnote{A \emph{rectilinear embedding} of a graph $G$ is an embedding of $G$ to square grid
such that (1) each vertex is placed on a grid point, (2) each edge joins two grid points by a sequence of straight orthogonal line segments of
integer length with turning points on grid points, (3) no pair of vertices and turning points shares the same grid point.
See \cite{MS98} for the further details.},
which can be done in $O(n^2)$ time \cite{MS98}.
To simplify the construction, we bend some edges (by refining the grid) to make
two red edges incident to an AND vertex collinear (\figurename~\ref{fig:col}).
We then define the frame $F$ so that each \emph{cell} of size $13\times 13$ contains one of 
an AND vertex, an OR vertex, a unit straight segment of an edge (\emph{wire}), 
a unit turning segment of an edge (\emph{corner}),
and an empty cell.

\begin{figure}\centering
\includegraphics[width=0.8\linewidth]{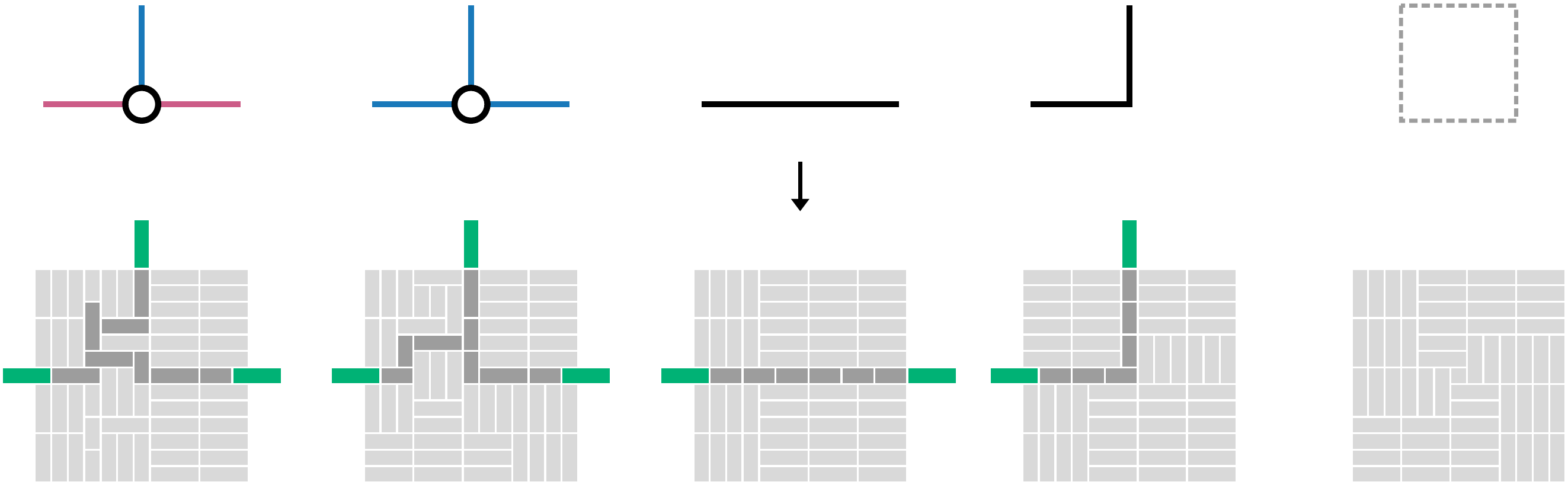}
\caption{Gadgets of size $13\times 13$ for an AND vertex, an OR vertex, a wire, a corner, and an empty cell.}
\label{fig:gadget}
\end{figure}


\paragraph{Step (2)}
The vertices, unit segments of edges, and empty cells are replaced by the gadgets shown in \figurename~\ref{fig:gadget}.
We note that these gadgets are designed in a square of size $13\times 13$
and green tetrmominoes are shared by both gadgets.
We embed these gadgets into the frame $F$ to represent the graph, with margins of gap 2.
If a vacant space of size $1\times 2$ remains at each place between two gadget, we fill it by a domino.

\begin{figure}\centering
\includegraphics[width=0.65\linewidth]{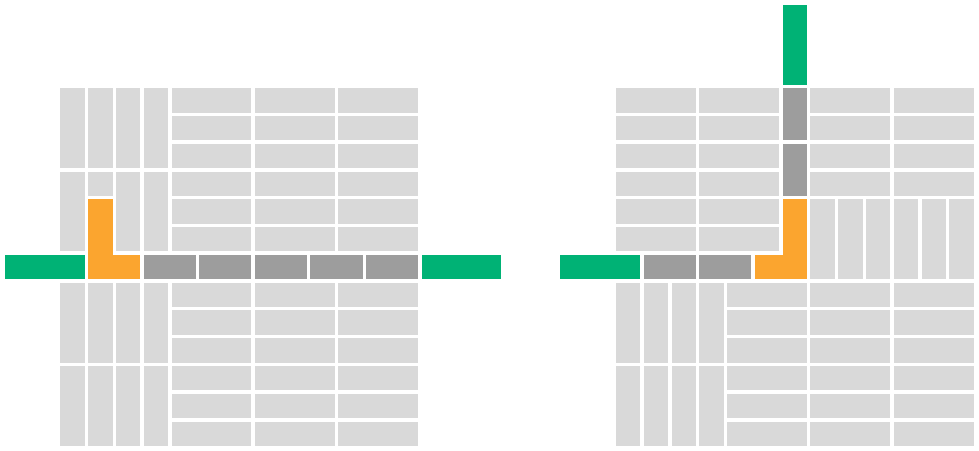}
\caption{Two edge gadgets for $e_t$ for embedding the L-tetromino.}
\label{fig:L-tetromino}
\end{figure}

\begin{figure}\centering
\includegraphics[width=0.8\linewidth]{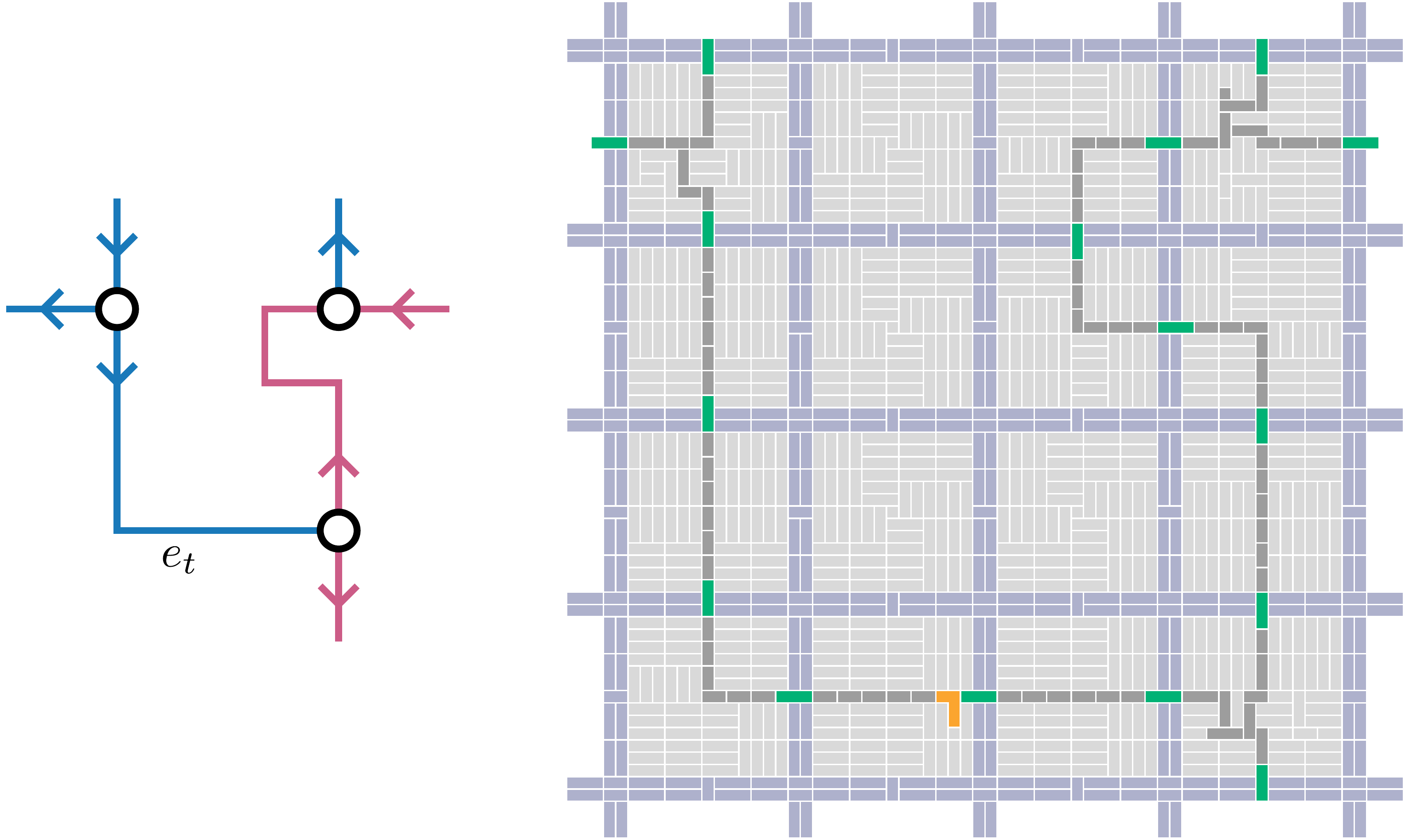}
\caption{Completed reduction after Step (3).}
\label{fig:complete}
\end{figure}

\paragraph{Step (3)}
Pick up any unit segment of the edge $e_t$, and embed
the left gadget in \figurename~\ref{fig:L-tetromino} if it is a straight, and
the right gadget in \figurename~\ref{fig:L-tetromino} if it is a turn.
We note that each of them contains the unique asymmetric polyomino of size 4 in L-shape (which is called the \emph{L-tetromino}),
and this L-tetromino can be flipped when the leftmost I-tromino moves to left.
Then we put polyominoes of unit size (which are called \emph{monominoes}) to fill all vacant unit squares except one.
(This one vacant unit square can be arbitrary.)
This is the end of our reduction.
An example of the construction is given in \figurename~\ref{fig:complete}.

\begin{figure}\centering
\includegraphics[width=0.4\linewidth]{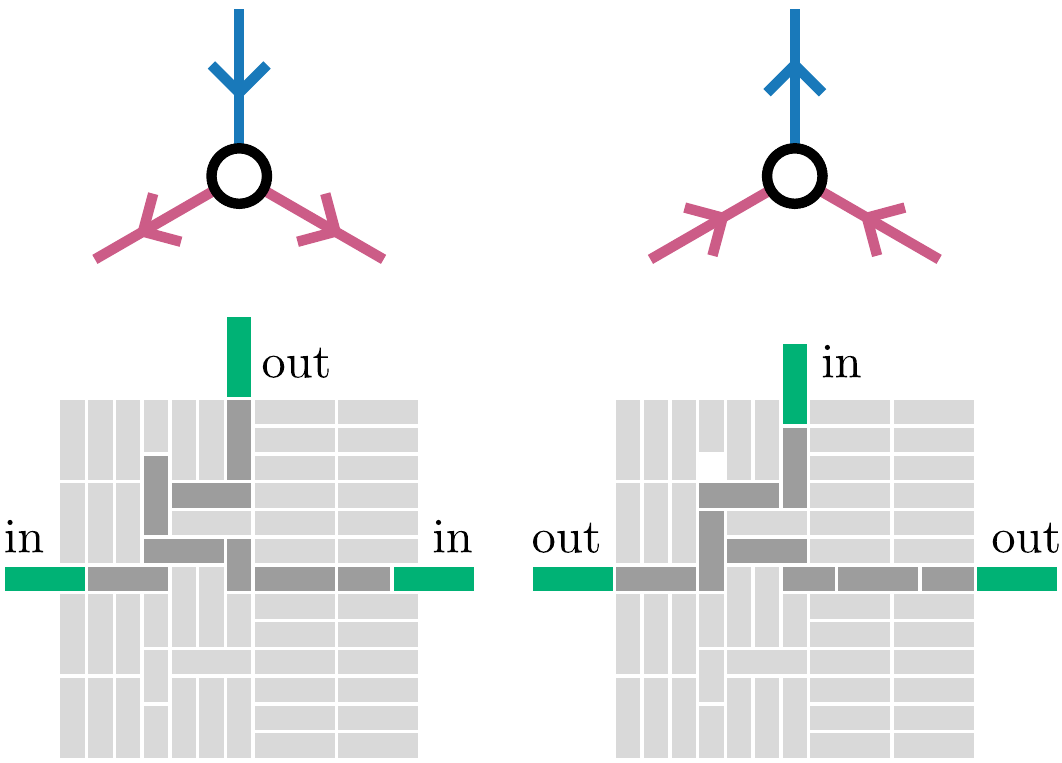}
\hspace{1cm}
\includegraphics[width=0.4\linewidth]{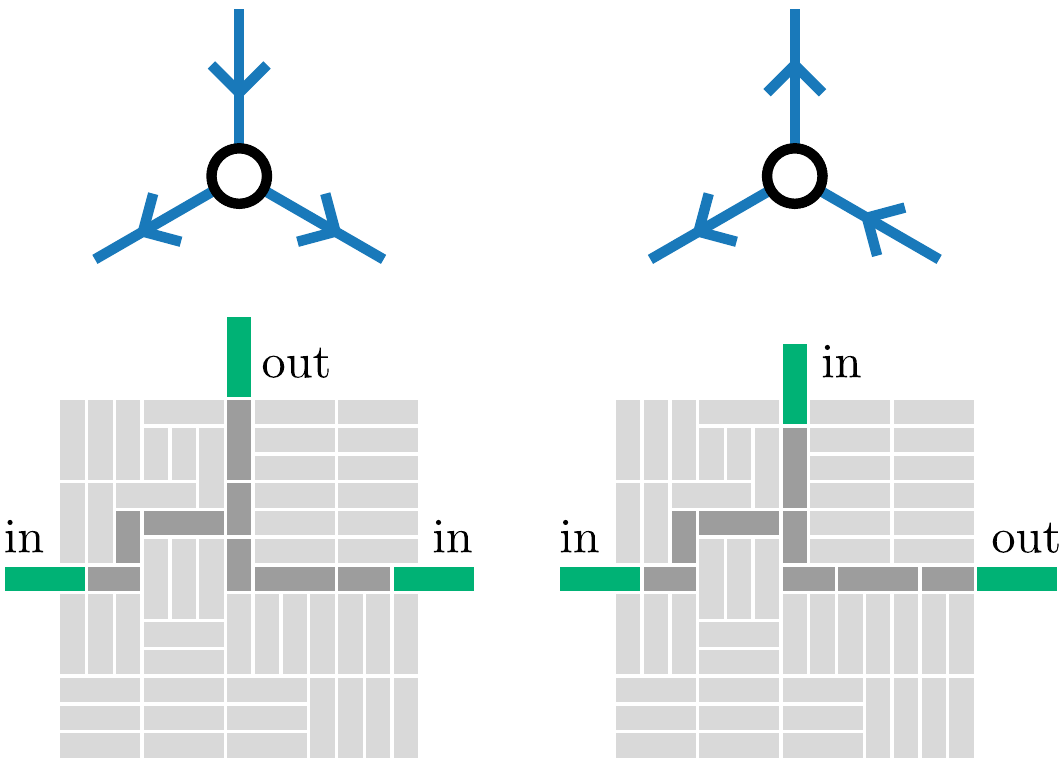}
\caption{Movements of gadgets.}
\label{fig:gadgets}
\end{figure}

It is clear that this reduction can be done in polynomial time,
the flip over puzzle constructed from $G$ satisfies the conditions (1) to (4) in Theorem \ref{th:main},
and the flip over puzzle is in PSPACE.
We show that $G$ is a yes instance of NCL if and only if
the instance of the flip over puzzle constructed from $G$ is a yes instance.
The basic idea is the same as one used in \cite[Sec.~9.3]{HearnDemaine2009};
when an edge $e$ changes the direction and points to a vertex $v$,
the corresponding \emph{I-tromino} (of size $1\times 3$ and in green in the figure) moves out one unit.
As shown in \figurename~\ref{fig:gadgets}, we designed two gadgets so that
(1) when both of the I-trominoes corresponding to the red edges moved out from an AND gadget,
the I-tromino corresponding to the blue edge of the AND gadget can move in, and 
(2) when one of the I-trominoes corresponding to the blue edges moved out from an OR gadget,
the I-tromino corresponding to the blue edge of the OR gadget can move in.
Using the same arguments in \cite[Sec.~9.3]{HearnDemaine2009},
we can show that the flip over puzzle simulates the movements of the NCL and vice versa.
The details are omitted here.

We note that the instance of the flip over puzzle constructed here has only one vacant unit square.
However, by jumping monominoes,
we can move the vacant unit square to anywhere we need as long as it is occupied by a monomino.
Thus, if we can eventually reverse $e_t$, we can make a vacant unit square next to the L-tetromino,
and then we can make it back-side up.
\qed
\end{proof}

It is known that NCL is still PSPACE-complete even if the input NCL graph is planar and have constant bandwidth
and it is given with a rectilinear embedding of constant height (see \cite[Thm.~11]{Zanden} for the details).
Thus we have the following corollary.
\begin{corollary}
The flip over puzzle is PSPACE-complete even with the conditions in Theorem \ref{th:main},
and the frame is of constant width.
\end{corollary}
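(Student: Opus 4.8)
The plan is to reuse, essentially verbatim, the reduction built in the proof of Theorem~\ref{th:main}, and to observe that its only structural effect is to convert the two dimensions of the rectilinear embedding of $G$ into the two dimensions of the frame $F$, up to a fixed multiplicative constant. In that reduction every cell is a square of the fixed size $13\times 13$ and consecutive cells are separated by margins of gap~$2$; consequently, if the underlying embedding has width $W$ and height $H$, then $F$ is a rectangle of size roughly $15W\times 15H$. Thus bounding $H$ by a constant immediately makes one side of $F$ constant, which is exactly the claim that $F$ has constant width.

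First I would replace the appeal to Theorem~\ref{th:NCL} by the stronger statement \cite[Thm.~11]{Zanden}: NCL stays PSPACE-complete on instances $(G,C_0,e_t)$ in which $G$ is planar of constant bandwidth and comes equipped with a rectilinear embedding of constant height. Starting from such an instance, I would drop the grid-drawing computation inside Step~(1) (whose sole purpose was to manufacture some rectilinear embedding via \cite{MS98}) and feed the supplied constant-height embedding directly into the gadget-placement Steps~(2) and~(3). Because those steps operate cell by cell --- substituting the AND, OR, wire, corner, and empty-cell gadgets of \figurename~\ref{fig:gadget}, together with the special $e_t$-gadget carrying the L-tetromino --- they preserve the aspect ratio of the embedding up to the $13$-per-cell scaling, so the resulting $F$ keeps one side of length $O(1)$.

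The step I expect to demand the most care is the edge-bending preprocessing that forces the two red edges at each AND vertex to become collinear (the grid refinement indicated in \figurename~\ref{fig:col}). I would argue that this refinement inflates both grid dimensions by at most a constant factor: in a graph of bounded bandwidth only a bounded number of wires cross any unit-width strip, so rerouting them with short local detours adds at most a constant amount to the height of each row, and constant height is therefore preserved. Once this is checked, the correctness equivalence ``$G$ is a yes-instance of NCL if and only if the constructed puzzle is a yes-instance,'' together with conditions~(1)--(4) of Theorem~\ref{th:main}, carries over unchanged from the proof of that theorem, and the corollary follows.
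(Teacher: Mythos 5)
Your proposal matches the paper's argument: the paper likewise just invokes the stronger hardness result of \cite[Thm.~11]{Zanden} (NCL on planar graphs of constant bandwidth with a constant-height rectilinear embedding) and reruns the reduction of Theorem~\ref{th:main}, whose constant-size $13\times 13$ cells and gap-$2$ margins keep one side of the frame $O(1)$. Your extra check that the edge-bending preprocessing for AND vertices only inflates the height by a constant factor is a detail the paper leaves implicit, but it does not change the route.
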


\subsection{NP-hardness on a frame of width 2}

In this section, we give a proof of Theorem \ref{th:2np}.
We reduce the following \textsc{3-Partition} problem to our problem.

\ptitle{3-Partition}
\begin{listing}{aaa}
\item[{\bf Input:}] Positive integers $a_1,a_2,a_3,\ldots,a_{3m}$ such that
	     $\sum_{i=1}^{3m}a_i=mB$ for some positive integer $B$ and $B/4<a_i<B/2$ for $1\le i\le 3m$.
\item[{\bf Output:}] Determine whether we can partition $\{1,2,\ldots,3m\}$ into 
	     $m$ subsets $A_1,A_2,\ldots,A_m$ so that $\sum_{i\in A_j} a_i=B$ for $1\le j\le m$.
\end{listing}
\medskip

\begin{figure}\centering
\includegraphics[width=0.9\linewidth]{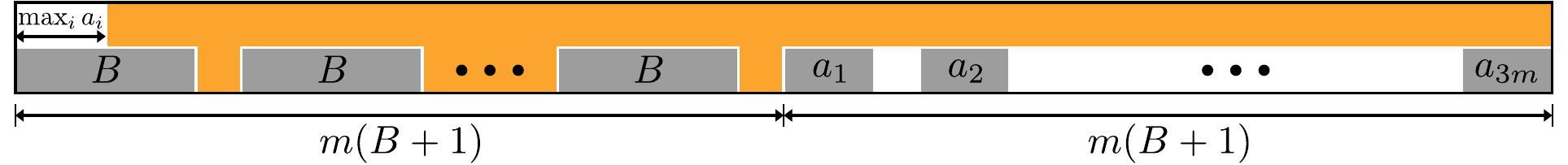}
\caption{Reduction for NP-hardness.}
\label{fig:3part}
\end{figure}

It is well known that the \textsc{3-Partition} problem is strongly NP-complete \cite{GJ79}.
Without loss of generality, we assume that $m<B/2$ (otherwise, we multiply each $a_i$ by $m'$ for some $m'>m$).
For a given instance $\langle a_1,a_2,\ldots,a_{3m}\rangle $ of \textsc{3-Partition},
we construct an instance of the jumping block puzzle as follows.
It consists of a frame $F$ and a set of pieces $\calP=\{P_1,P_2,\ldots,P_{4m+1}\}$.
The frame $F$ is a $2\times (2m(B+1))$ rectangle.
For $1 \le i \le 3m$, the piece $P_{i}$ is a $1 \times a_{i}$ rectangle.
The pieces $P_{3m+1}, P_{3m+2}, \dots P_{4m}$ are $1 \times B$ rectangles.
The unique asymmetric piece $P_{4m+1}$ is drawn in \figurename~\ref{fig:3part}.
It almost covers the whole frame $F$ except (1) one rectangle of size $1\times \max_i a_i$,
(2) $m$ rectangles of size $1\times B$ in the left side, and (3) one rectangle of size $1\times m(B+1)$ in the right side.
The initial feasible packing of $\calP$ to $F$ is also described in \figurename~\ref{fig:3part}.
All pieces are placed front-side up.
The asymmetric piece $P_{4m+1}$ is placed in $F$ as shown in \figurename~\ref{fig:3part},
and $m$ rectangles of size $1\times B$ are occupied by $m$ pieces $P_{3m+1},P_{3m+2},\ldots, P_{4m}$ of size $1\times B$.
The other $3m$ polyominoes $P_{1},P_{2},\ldots, P_{3m}$ are put
in the rectangle of size $1\times m(B+1)$ in the right side of $P_{4m+1}$ in arbitrary order.
We assume that the piece $P_{3m+i}$ occupies the $i$-th vacant space of size $1\times B$ from the center of $P_{4m+1}$.

Now, we show that all pieces in $\calP$ can be flipped back-side up if and only if
the original instance of \textsc{3-Partition} is a yes instance.
Since all pieces in $\calP$ are rectangles except $P_{4m+1}$,
it is sufficient to ask whether $P_{4m+1}$ can be flipped.

We first assume that all pieces in $\calP$ can be flipped.
It is easy to see that $P_{4m+1}$ can be flipped only in the horizontal direction.
We first observe that no piece of size $1\times B$ can be moved to any other place since
the vacant space at the top-left corner is small and we cannot make
a space for size $1\times B$ in the right-hand side even if the largest piece of size $1\times a_i$ is moved to
the top-left corner by the assumption $m<B/2$.
Now we consider the situation just before the asymmetric piece $P_{4m+1}$ is flipped.
%
The top-left vacant space of size $1\times \max_i a_i$ flips to the top-right corner, and
the corresponding top-right part of $P_{4m+1}$ flips to the top-left corner.
Therefore, there is no piece in this vacant space when we flip $P_{4m+1}$.
In the bottom-left half in $P_{4m+1}$, we have $m$ vacant spaces of size $1\times B$,
and each space is filled by a piece $P_i$ of size $1\times B$ for $3m+1\le i\le 4m$.
When the piece $P_{4m+1}$ is flipped, all these $m$ vacant spaces are flipped to the right half of the frame.
On the other hand, the total length of the remaining pieces $P_1,P_2,\ldots,P_{3m}$ is $mB$.
They should be in the right half of the frame, and $P_{4m+1}$ cannot overlap with any of
them when it is flipped.
Therefore, these pieces $P_1,P_2,\ldots,P_{3m}$ should fit into $m$ vacant spaces of size $1\times B$ in
$P_{4m+1}$ when these vacant spaces come from the left side to the right side by the flip of $P_{4m+1}$.
Therefore, all of these $m$ vacant spaces of size $1 \times B$ in $P_{4m+1}$ should be filled by
the pieces $P_1,P_2,\ldots,P_{3m}$ when $P_{4m+1}$ is flipped.
In order to do that, $P_1,P_2,\dots,P_{3m}$ should be partitioned into $m$ groups so that
each group exactly fills a vacant space of size $1\times B$ in total.
This partition clearly gives us a solution of the instance of the \textsc{3-Partition} problem.

We next assume that the instance of the \textsc{3-Partition} problem is a yes instance.
Then $\{1,\dots,3m\}$ has a partition into $m$ 3-subsets $A_1,A_2,\ldots,A_m$,
where each $A_i$ consists of $i_1,i_2,i_3$ with $a_{i_1}+a_{i_2}+a_{i_3}=B$.
Using the left-top vacant space as a working space,
we can sort the pieces $P_{1}, \dots, P_{m}$ into \emph{any order} in the right-bottom space.
(E.g., we can simulate the bubble sort.)
Thus we sort $P_{1}, \dots, P_{m}$ in such a way that for $1 \le i \le m$,
the positions $3i-2, 3i-1, 3i$ are occupied by the pieces $P_{i_{1}},P_{i_{2}},P_{i_{3}}$.
Finally, we adjust the positions of the pieces to have a vacant space of unit size after every third piece.
After this process, since we use a solution of \textsc{3-Partition},
$P_1,\ldots,P_{3m}$ form $m$ rectangles of size $1\times B$. Therefore we can flip $P_{4m+1}$.
This completes the proof of Theorem \ref{th:2np}.
\qed

\subsection{NP-completeness with constant flips}

In the proof of Theorem \ref{th:2np}, we reduced the \textsc{3-Partition} problem to the flip over puzzle.
We saw that the instances constructed there had polynomial-length yes-witnesses,
however, we do not know whether the flip over puzzle on a frame of width 2 is in NP or not.
In this section, we focus on the flip over puzzle with constant flips.
In this model, we restrict ourselves that the number of flips for each piece is bounded above by a constant.
We assume that each piece can be moved at most $c$ times for some constant $c$.
Then, if an instance of the puzzle has a solution, the solution can be represented by a sequence of moves,
and the number of moves is bounded above by $c n$, where $n$ is the number of pieces.
Therefore, each yes-instance has a witness of polynomial length, which implies that the puzzle with the restriction is in NP.
In this section, we show that the flip over puzzle is still NP-complete even if each piece can be flipped at most once
with some additional restrictions. 

\begin{theorem}
\label{th:npc1}
The flip over puzzle with constant flips is NP-complete even if it satisfies all the following conditions (0), (1), (2), and (3).
(0) Each piece can be moved at most once, 
(1) the frame is a rectangle without holes,
(2) every line symmetric piece is a rectangle of size $1\times k$ with $1\le k\le 3$, and
(3) there is only one asymmetric piece of size 4.
\end{theorem}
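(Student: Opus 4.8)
The plan is to establish membership and hardness separately. Membership in NP is immediate from the constant-flips assumption discussed just above the statement: under condition (0) each piece is moved at most once, so any solution is a sequence of at most $n$ flips and thus a witness of polynomial length that is verifiable in polynomial time. For the hardness, I would reduce from Bounded NCL, which is NP-complete on planar graphs by Theorem~\ref{th:NCL}(2). The central observation is that the construction used for Theorem~\ref{th:main} already produces instances satisfying conditions (1)--(3), so I would reuse the same cell-border frame together with the AND, OR, wire, corner, empty, and $e_t$ gadgets of \figurename~\ref{fig:gadget} and \figurename~\ref{fig:L-tetromino} essentially verbatim; only the analysis and the target reconfiguration problem change.

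The key to obtaining condition (0) is to exploit the defining restriction of Bounded NCL, namely that each edge is reversed at most once. Recall that in the reduction an edge reversal corresponds to moving the associated green I-tromino (see \figurename~\ref{fig:gadgets}) out or in by one unit; since each edge reverses at most once, each such I-tromino is displaced at most once over the entire simulation. Crucially, a single flip is permitted to translate and turn over a piece simultaneously, so I would realize each NCL edge reversal by one flip that both repositions the I-tromino and turns it back-side up. Every remaining line-symmetric piece---the monominoes, the dominoes, and the I-trominoes of edges never reversed---can be turned back-side up by a single \emph{in-place} flip, which is always possible for a line-symmetric piece regardless of its surroundings. Hence every piece is moved at most once, establishing condition (0), while conditions (1)--(3) are inherited directly from the construction of Theorem~\ref{th:main}.

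For correctness I would argue the two directions as in the PSPACE proof, but while tracking the at-most-once budget. For completeness, a yes-sequence of Bounded NCL is simulated by flipping, in NCL order, the I-tromino of each reversed edge, each such flip simultaneously turning the piece over; once $e_t$ is reversed, the $e_t$ gadget of \figurename~\ref{fig:L-tetromino} exposes the space needed to flip the L-tetromino, which I then flip once, and finally every still front-side-up line-symmetric piece is flipped in place. For soundness, any flip sequence making every piece back-side up must in particular flip the unique asymmetric L-tetromino; by the gadget geometry this is possible only after the I-tromino of $e_t$ has moved, that is, only after $e_t$ has been reversed, and the AND and OR gadgets force the NCL weight constraint to hold at every intermediate configuration, so the induced sequence of I-tromino positions is a legal Bounded NCL reversal sequence ending with $e_t$.

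The main obstacle is faithfulness of the simulation under the move-once restriction. Unlike Theorem~\ref{th:main}, I cannot rely on condition (4) and the trick of shuttling monominoes to bring a single vacant square next to the L-tetromino, since a monomino may now move at most once. Instead I must verify that the $e_t$ gadget is designed so that the single outward move of its I-tromino directly frees the cells required for the one permitted flip of the L-tetromino. I also have to check that dropping condition (4) and exposing the gadgets' internal vacant cells does not create unintended one-shot moves that would let a piece leave its gadget and break the NCL semantics; here the move-once budget in fact helps, since it forbids the repeated shuttling that any such shortcut would require. Confirming these local gadget properties, together with the scheduling that defers every in-place turnover to the end, completes the reduction and hence the proof of Theorem~\ref{th:npc1}.
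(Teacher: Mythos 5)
Your overall strategy---NP membership from the move-once bound, NP-hardness by rerunning the Theorem~\ref{th:main} reduction from Bounded NCL with the monomino filling of Step~(3) omitted so that no piece needs to move twice---is exactly the paper's. You also correctly flag the two points that need care: condition~(4) must be dropped because the monomino-shuttling trick is no longer available, and the newly exposed internal vacant cells might enable unintended moves.

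The gap is in how you resolve that second point. You propose to keep the AND, OR, wire, corner and $e_t$ gadgets ``essentially verbatim'' and argue that the move-once budget rules out any shortcut because a shortcut ``would require repeated shuttling.'' That is not sufficient: a \emph{single} unintended move already breaks soundness, and the paper states explicitly that once the monominoes are removed the AND gadget of \figurename~\ref{fig:gadget} fails for precisely this reason---its vacant unit squares can coalesce into an empty $1\times 2$ space (``an empty domino''), after which a piece can move in a way that does not respect the AND constraint. The paper therefore replaces the AND gadget by a different one (\figurename~\ref{fig:AND}) designed to operate with two or more vacant unit squares (and which, conversely, would not work in the Theorem~\ref{th:main} construction, where only one vacant square exists). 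As written, your reduction would use an unsound AND gadget; the missing idea is recognizing this concrete failure and constructing the replacement gadget.
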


\begin{figure}\centering
\includegraphics[width=0.4\linewidth]{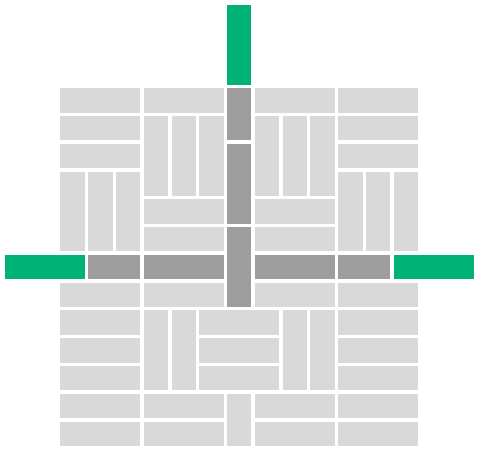}
\caption{AND gadget for Theorem \ref{th:npc1}.}
\label{fig:AND}
\end{figure}

\begin{proof}
The claim can be proved by following the same strategy of the proof of Theorem \ref{th:main} in Section \ref{sec:pspace}.
That is, we give a polynomial-time reduction from the bounded NCL problem.
The construction from the NCL graph can be done in almost the same way shown in Section \ref{sec:pspace} except Step (3) and the AND gadget.
This time, we do not fill each of vacant unit squares by monominoes in Step (3).
In the reduction in the proof, the monominoes are only pieces that need to be moved twice or more.
Thus the resultant puzzle satisfies the condition (0).
On the other hand, if we remove all monominoes from the gadgets,
the AND gadget in \figurename~\ref{fig:gadget} can produce an empty domino, and hence it does not work as is.
To avoid it, we replace the AND gadget in \figurename~\ref{fig:gadget} by
one given in \figurename~\ref{fig:AND}.\footnote{We note that the AND gadget in \figurename~\ref{fig:AND}
does not work in the proof of Theorem \ref{th:main} since this gadget requires at least two vacant unit squares.}
Then we can observe that an instance of the bounded NCL problem is a yes instance
if and only if the constructed instance of the flip over puzzle has a solution with the conditions (0), (1), (2), and (3).
\qed\end{proof}

\begin{theorem}
\label{th:npc2}
The flip over puzzle with constant flips is NP-complete even if it satisfies all the following conditions (0), (1), (2), (3), and (4).
(0) Each piece can be moved at most once, 
(1) the frame is a rectangle without holes,
(2) every line symmetric piece is a rectangle of size $1\times k$ with $1\le k\le 3$,
(3) all asymmetric pieces are of size 4, and
(4) there is only one vacant unit square.
\end{theorem}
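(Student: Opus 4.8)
The plan is to prove NP-completeness by the same template as Theorem~\ref{th:npc1}. Membership in NP is immediate from the constant-flips assumption: condition~(0) forces any solution to consist of at most $n$ flips, so a yes-instance has a witness of length polynomial in $n$, exactly as argued for the constant-flips model above. For hardness I would reduce from bounded NCL, which is NP-complete even on planar graphs by Theorem~\ref{th:NCL}(2), reusing the rectilinear embedding and the cell-border/gadget layout of Section~\ref{sec:pspace}. The I-trominoes continue to encode edge orientations, each edge reversal being realised by moving its I-tromino out by one unit; because bounded NCL reverses every edge at most once, each such I-tromino is repositioned at most once, and I would fuse that repositioning with its turn-over so that it is touched at most once in total, meeting condition~(0).

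The real issue is to reconcile condition~(4) (a single vacant unit square) with condition~(0) (each piece moved at most once). The construction of Theorem~\ref{th:main} obtains a single vacancy by packing the board with monominoes and shuttling them to carry the free cell wherever it is needed, but this forces the monominoes to move many times and thus violates~(0); conversely, the construction of Theorem~\ref{th:npc1} respects~(0) only by leaving many vacant cells, violating~(4). My resolution is to trade condition~(3) of Theorem~\ref{th:npc1} (a unique asymmetric piece) for the present, weaker condition~(3) (all asymmetric pieces have size~$4$). Concretely, I would fill the region that Theorem~\ref{th:npc1} leaves open with asymmetric L-tetrominoes so that only one unit cell remains empty, and arrange them so that this lone vacancy can \emph{propagate} through the fillers by a cascade of one-shot flips: when the travelling vacancy reaches a filler tetromino, that tetromino is flipped exactly once into the freed space, which simultaneously turns it back-side up and relocates the vacancy so as to feed free space to the adjacent gadget.

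What makes this faithful is that bounded NCL consumes the free space at any gadget \emph{at most once}, since an edge reverses at most once; hence a one-shot supply of free space, delivered by an irreversible flip of an asymmetric tetromino, suffices precisely where the reusable monomino shuttle of Theorem~\ref{th:main} would have delivered it. I would redesign the AND gadget for this regime, because the gadget of \figurename~\ref{fig:AND} needs at least two simultaneous vacancies and the gadget of \figurename~\ref{fig:gadget} relies on monominoes; the new gadget must operate correctly with the single vacancy that the cascade brings to it. With these gadgets in place one argues, exactly as in Section~\ref{sec:pspace}, that the puzzle simulates the bounded NCL dynamics: the target edge $e_t$ becomes reversible iff its L-tetromino can be flipped, and once it is flipped the single vacancy can be routed so that every filler tetromino and every I-tromino is turned over, yielding an all-back-side-up packing; thus the instance is solvable iff the bounded NCL instance is a yes-instance.

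The hard part is the gadget engineering. One must verify that a single vacant cell, constrained to move only forward and to touch each piece at most once, can nonetheless be routed through the whole board so as to (i) deliver to each gadget the free space it needs in an order consistent with some legal bounded-NCL reversal sequence, and (ii) eventually flip every filler tetromino together with the target L-tetromino, never reaching a configuration in which some piece can no longer be turned over. Proving that such a routing exists exactly when the bounded-NCL instance admits a reversal of $e_t$, and that no alternative routing can solve a no-instance, is the crux of the correctness argument; the rest of the reduction, including verification of conditions~(1)--(4), is routine.
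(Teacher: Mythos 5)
Your approach diverges fundamentally from the paper's, and the step you defer as ``gadget engineering'' is in fact where the plan breaks. The paper does not reduce from bounded NCL here; it reduces from the Hamiltonian cycle problem on 3-regular planar digraphs in which each vertex has indegree 1 or 2, replacing the two vertex types by two new gadgets and reusing the wire gadget of Section~\ref{sec:pspace}. The single vacant unit square then travels through the construction like a peg (in the style of \cite{UI90}): at an indegree-1 vertex the central vertical I-trominoes pass it through, and at an indegree-2 vertex the L-tetromino forwards it to the unique outgoing direction. Because every asymmetric piece must eventually be flipped, cannot be flipped in place, and may be moved at most once, the vacancy is forced to visit every vertex gadget, which is precisely a Hamiltonian cycle. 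In other words, under conditions (0) and (4) the goal of flipping \emph{all} pieces is itself a global-traversal condition, and the paper makes that condition the \emph{source} of hardness rather than something to be engineered around.

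Your plan fights that condition instead of exploiting it. If you pack the board with asymmetric L-tetromino ``fillers,'' every one of them must be flipped, so the lone vacancy must be routed through every filler, each usable only once --- a Hamiltonicity-type constraint on the filler network that is independent of the bounded-NCL instance. For completeness you would need that in every yes-instance of bounded NCL the vacancy can both reach the gadgets in an order consistent with some legal reversal sequence and also traverse and flip every remaining filler with one-shot moves; for soundness you would need that no alternative routing succeeds on a no-instance. Neither direction follows from the NCL simulation, and the two requirements interact badly: once the vacancy has been spent feeding a gadget deep in the board, the already-flipped one-shot fillers block its path to the fillers it has not yet visited. The ``crux'' you postpone is therefore not a routine verification but the entire content of the theorem, and the reduction from Hamiltonian cycle is how the paper supplies it.
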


\begin{figure}\centering
\includegraphics[width=0.8\linewidth]{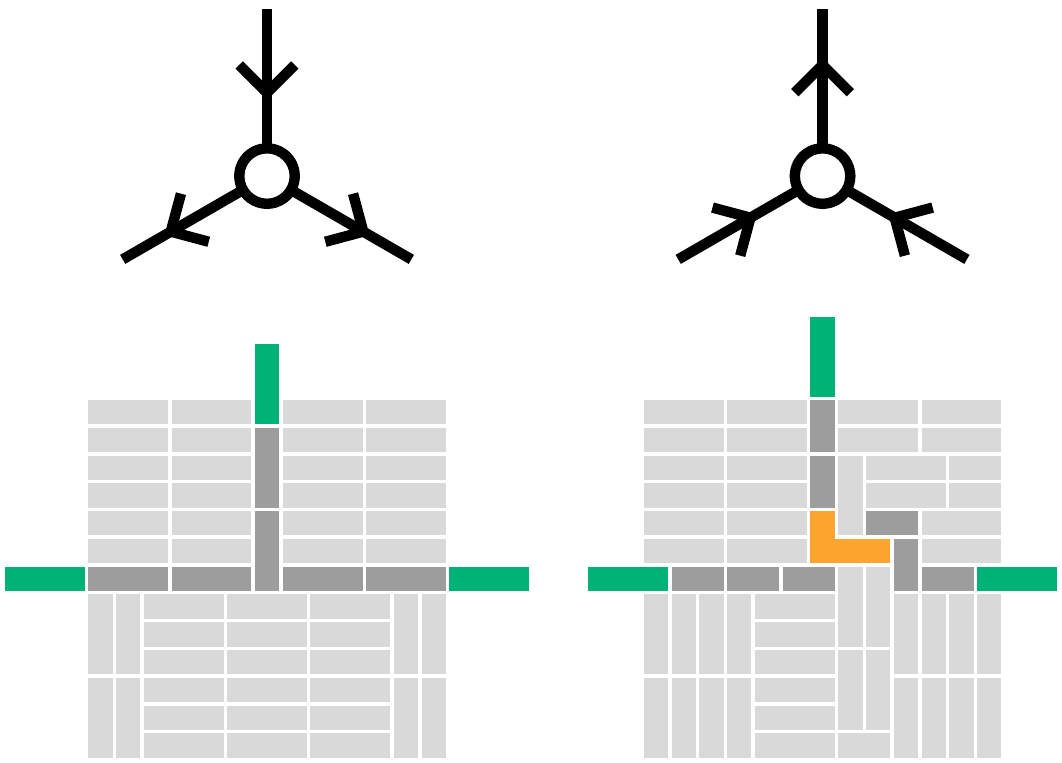}
\caption{The two gadgets for replacing the two types of vertices.}
\label{fig:two}
\end{figure}

\begin{proof}
In order to prove NP-hardness, we give a polynomial-time reduction from
the \emph{Hamiltonian cycle problem} on 3-regular planar digraphs where each vertex has indegree 1 or 2
which is a classic well-known NP-complete problem \cite{Ple79}.
We use an idea similar to the one in \cite{UI90}: For a given 3-regular planar digraph $G$,
we replace the two types of vertices by the two types of gadgets shown in \figurename~\ref{fig:two}, respectively,
and join them by the same wire gadget in Section \ref{sec:pspace}.
Our construction is done by making one (and unique) vacant unit square in a wire gadget. 
(We note that every edge joining an outdegree-1 vertex to an indegree-1 vertex should belong to any Hamiltonian cycle.\footnote{
Furthermore, the Hamiltonian cycle problem on 3-regular planar digraphs is still NP-complete even if
(1) the numbers of outdegree-1 vertices and indegree-1 vertices are the same, and
(2) each outdegree-1 vertex should be joined to an indegree-1 vertex.
In the other cases, we can reduce redundant edges that never be used in an Hamilton cycle.
The details are omitted here.} Therefore, the vacant unit square can be put in any edge of this type.)
The unique vacant unit square moves like the peg in \cite{UI90}: when it visits an indegree-1 vertex as shown in
the left side of \figurename~\ref{fig:two}, the central vertical I-trominoes in the figure
move the vacant unit square to center, and it will be brought to left or right.
On the other hand, when the vacant unit square visits an indegree-2 vertex from one of the two incoming directions,
then the L-tetromino can move it to the outgoing direction of the vertex. 
Therefore, $G$ has a Hamiltonian cycle if and only if the flip over puzzle has a solution with the conditions (0), (1), (2), and (3).
\qed\end{proof}

\begin{theorem}
\label{th:npc3}
The flip over puzzle with constant flips is NP-complete even with all the following conditions:
(0) Each piece can be moved at most once, 
(1) the frame is a rectangle of width $2$ without holes,
(2) all pieces are rectangles of width 1 except 2, and
(3) there is only one asymmetric piece.
\end{theorem}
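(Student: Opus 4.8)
The plan is to reduce from the strongly NP-complete \textsc{3-Partition} problem, adapting the width-$2$ construction behind Theorem~\ref{th:2np}. Membership in NP is immediate from condition~(0): each piece moves at most once, so any solution is a sequence of at most $n$ moves and hence a polynomial-size witness, as already argued at the start of this section. All the work is therefore in the hardness reduction, and the single new difficulty relative to Theorem~\ref{th:2np} is that the small rectangles must reach their destinations \emph{without} the bubble-sort used there, which spends many moves per piece.

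First I would build a $2\times\Theta(mB)$ frame carrying the $3m$ width-$1$ rectangles $P_1,\dots,P_{3m}$ of sizes $a_1,\dots,a_{3m}$, the $m$ filler rectangles of size $1\times B$, a single large asymmetric piece whose flip is blocked until $m$ length-$B$ \emph{slots} have been filled, and one additional symmetric piece of width $2$ (these last two being the pieces that condition~(2) allows to deviate from width-$1$ rectangles). The crucial structural change from Theorem~\ref{th:2np} is that the slots are \emph{initially empty} and spatially separated from the source region holding the $P_i$; because placing a piece in a block puzzle is a lift-and-drop rather than a slide, each $P_i$ can then be sent straight into a still-empty slot in one move, with no need to first evacuate a neighbour. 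The extra width-$2$ piece is used to rigidify this layout, so that the intended filling of the slots is the only route to flipping the asymmetric piece and the second row cannot be abused as free scratch space.

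For the $(\Leftarrow)$ direction, given a partition $A_1,\dots,A_m$ with $\sum_{i\in A_j}a_i=B$, I send the three rectangles of each $A_j$ one by one into the $j$-th slot; since every $P_i$ is a width-$1$ rectangle and hence line symmetric, a single move both relocates and flips it, and once the slots are exactly full the asymmetric piece is flipped in its own single move, so condition~(0) holds throughout. For the $(\Rightarrow)$ direction, I examine the configuration just before the unique flip of the asymmetric piece: the $m$ slots of size $1\times B$ must then be completely filled by $P_1,\dots,P_{3m}$, and the bounds $B/4<a_i<B/2$ force exactly three rectangles into each slot, so the induced grouping is a valid $3$-partition.

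The hard part will be the $(\Rightarrow)$ direction: certifying that the extra freedom of a width-$2$ frame cannot be used to flip the asymmetric piece without genuinely solving \textsc{3-Partition}. With two rows available and only one move per piece, I must rule out sideways packings of the $P_i$ and opportunistic uses of the second row, which is exactly what the additional symmetric piece and the tight move-once budget are engineered to prevent. Checking that these gadgets still admit the honest one-move-per-piece solution of the $(\Leftarrow)$ direction while leaving no shortcut in the $(\Rightarrow)$ direction is the delicate case analysis the proof must complete.
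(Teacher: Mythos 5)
Your overall strategy matches the paper's: reduce from the strongly NP-complete \textsc{3-Partition} problem on a width-$2$ frame, with one large slotted piece whose $m$ cavities of size $1\times B$ must be exactly filled by the width-$1$ rectangles $P_1,\dots,P_{3m}$, each placed in a single lift-and-drop move, before the unique asymmetric piece can perform its one flip; membership in NP follows from condition~(0) exactly as you say. (One inessential difference: in the paper the slotted piece $P_{3m+1}$ is the line-symmetric exception to condition~(2) and a separate long L-shaped piece $P_{3m+2}$ is the unique asymmetric piece whose flip certifies success, whereas you make the slotted piece itself asymmetric and add a symmetric width-$2$ companion.)

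There is, however, one concrete step that would fail as written. You carry over from Theorem~\ref{th:2np} the $m$ filler rectangles of size $1\times B$ while simultaneously declaring the $m$ slots of size $1\times B$ to be initially empty. Taken literally, a solver can then drop each filler into a slot --- one move per filler, fully compatible with condition~(0) --- and flip the asymmetric piece without ever touching the $a_i$-rectangles, so the constructed puzzle is solvable even when the \textsc{3-Partition} instance is a no-instance, breaking the $(\Rightarrow)$ direction. The paper's construction has no such fillers: its piece set is exactly the $1\times a_i$ rectangles $P_1,\dots,P_{3m}$ plus the two non-width-$1$ pieces $P_{3m+1}$ and $P_{3m+2}$, so the small rectangles are the only material available to fill the slots. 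Deleting the fillers repairs your construction. Beyond that, you explicitly leave open the rigidity analysis showing no shortcut exists in the $(\Rightarrow)$ direction; the paper is equally terse there (it simply asserts that the listed three-stage move sequence is the only way to flip $P_{3m+2}$), so that omission does not put you behind the paper's own level of detail.
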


\begin{figure}\centering
\includegraphics[width=0.9\linewidth]{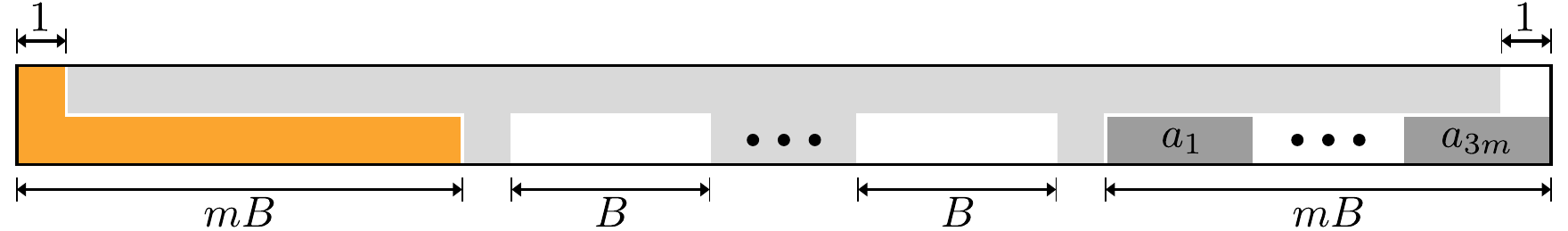}
\caption{Gadgets for the flip over puzzle of width $2$.}
\label{fig:flip2}
\end{figure}

\begin{proof}
The basic idea is similar to the proof of Theorem \ref{th:2np}.
We reduce the \textsc{3-Partition} problem to our problem as shown in \figurename~\ref{fig:flip2}.
For given positive integers $a_1,a_2,a_3,\ldots,a_{3m}$, $F$ is a rectangle of size $2\times (3mB+m+1)$.
Each $P_i$ is a rectangle of size $1\times a_i$ for $1\le i\le 3m$.
The large piece $P_{3m+1}$ is as shown in \figurename~\ref{fig:flip2}:
$P_{3m+1}$ has $m$ vacant rectangles of size $1\times B$ in center,
and two vacant rectangles of size $1\times (mB-1)$ on the left and right sides.
The last piece $P_{3m+2}$ is a long L-shape polyomino as shown in \figurename~\ref{fig:flip2}.
Since each block can be flipped at most once, the only way to flip $P_{3m+2}$ is
(0) flip $P_{3m+1}$, 
(1) flip $P_i$ for each $i$ with $1\le i\le 3m$ into $m$ vacant rectangles of $1\times B$ in center of $P_{3m+1}$, and
(2) flip $P_{3m+2}$ from the left side in $F$ to the right side in $F$.
Therefore, the instance of the \textsc{3-Partition} problem is a yes instance if and only if
the constructed instance of the flip over puzzle with the conditions has a solution.
\qed\end{proof}

\section{Flying block puzzles}
\label{sec:fly}
In this section, we turn to the flying block puzzles.
In flying block puzzles, we can translate and rotate pieces but cannot flip them.
In the flip over puzzle in Section \ref{sec:flip}, almost all pieces are rectangles except a few asymmetric pieces.
Since a rectangle does not change by a flip, we can inherit most of the results for flip over puzzles in Section \ref{sec:flip}
to ones for flying block puzzles by changing some special pieces.
However, while the goal of the flip over puzzle is to flip \emph{all} pieces,
the goal of the flying block puzzle is to arrange a \emph{specific} piece to the goal position.
This difference requires different techniques.

The counterpart of Theorem \ref{th:main} is as follows:
\begin{theorem}
\label{th:main-fly}
The flying block puzzle is PSPACE-complete even with all the following conditions:
(1) the frame is a rectangle without holes,
(2) every piece is a rectangle of size $1\times k$ with $1\le k\le 3$, and
(3) there is only one vacant unit square.
\end{theorem}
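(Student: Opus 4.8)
The plan is to give a polynomial-time reduction from NCL to the flying block puzzle, reusing as much of the machinery from the proof of Theorem~\ref{th:main} as possible. The key structural observation is that in the PSPACE-completeness proof for the flip over puzzle, the only genuinely asymmetric piece was the single L-tetromino used in Step~(3); every other piece was a rectangle of width~$1$ (monomino, domino, or I-tromino). Since a rectangle does not change under a flip, the entire gate/wire/corner construction---the cell border, the AND and OR gadgets, and the movement of the green I-trominoes that simulates edge reversals---carries over verbatim to the flying setting, where flips are forbidden but translations and rotations are allowed. The behaviour of each rectangular gadget piece is identical whether or not flips are permitted, so the correspondence between legal NCL configurations and feasible packings is inherited directly.

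The one thing that must be redesigned is the target: the flip over puzzle asked to make \emph{every} piece back-side up, which is exactly where the asymmetric L-tetromino did its work, whereas the flying block puzzle asks instead to bring a \emph{specific} piece $P$ to a prescribed goal position. First I would remove the L-tetromino and the special Step~(3) gadget entirely, since there is no longer any notion of flipping to a back side. In its place, for the target edge $e_t$, I would designate the green I-tromino associated with $e_t$ as the specific piece $P$, and take its goal position to be the grid location it can reach precisely when $e_t$ is oriented in the reversed direction. The ``move in one unit'' mechanism that already encodes an edge orientation then does double duty: $P$ can be flown to its goal cell if and only if the simulated NCL configuration has $e_t$ pointing the right way, which by Theorem~\ref{th:NCL}(1) is the NCL question. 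As before, after Step~(2) we fill all remaining vacant unit squares with monominoes except one, and use the mobility of the monominoes to shuttle the single vacancy wherever it is needed to enable the gadget movements; this keeps condition~(3) satisfied.

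It remains to verify the three stated conditions, which are immediate from the construction: the frame is the same large rectangle without holes (condition~(1)); every surviving piece is a rectangle of size $1\times k$ with $1\le k\le 3$ because we have deleted the only non-rectangular piece (condition~(2)); and the single-vacancy bookkeeping gives condition~(3). Membership in PSPACE is clear since a configuration is described by the positions and orientations of polynomially many pieces, and a nondeterministic polynomial-space search over single flies decides reachability. I would then state the forward and backward simulations, citing the same arguments as in \cite[Sec.~9.3]{HearnDemaine2009} and the proof of Theorem~\ref{th:main} for the gadget dynamics, and noting only the change at the target.

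The main obstacle I anticipate is the backward direction of the correctness argument, specifically showing that the flying freedom does not introduce spurious solutions. Because flies allow a piece to be picked up and re-inserted anywhere (not merely slid along a channel), I must argue that the cell-border construction still confines every gadget piece to its intended cell---that no I-tromino, domino, or monomino can be relocated across gadget boundaries in a way that reaches the goal for $P$ without a corresponding legal NCL move sequence. This is exactly the role the width-$2$ cell border and the $13\times 13$ gadget packing were designed to play, so the confinement should follow from the same integrity argument used in Theorem~\ref{th:main}; the extra care needed here is only to confirm that rotating a piece (now the sole non-translational freedom besides flipping, which is removed) cannot unlock an unintended motion. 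Since all movable pieces are rectangles of width~$1$, their rotations are constrained by the same tight packing, and I expect the confinement to go through essentially unchanged.
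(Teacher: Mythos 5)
Your proposal matches the paper's proof: the paper likewise drops the L-tetromino and instead designates the I-tromino in the wire or corner gadget for $e_t$ as the specific piece, with the goal position being the location it reaches exactly when $e_t$ is reversed, inheriting all other gadgetry from Theorem~\ref{th:main}. Your additional discussion of confinement under flies is a reasonable elaboration of what the paper leaves implicit, but the approach is essentially identical.
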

\begin{proof}(Outline)
In the proof of Theorem~\ref{th:main}, we use an L-tetromino as the unique asymmetric piece.
In the flying block puzzle, we do not need to use this trick.
Instead of that, the goal of the resulting flying block puzzle is just to move
any specific I-tromino in the wire or corner gadget in \figurename~\ref{fig:gadget} corresponding to $e_t$,
which plays the same role of the flip of the L-tetromino in the proof of Theorem~\ref{th:main}.
\qed\end{proof}

The counterpart of Theorems \ref{th:2np} and \ref{th:npc3} is as follows:
\begin{theorem}
\label{th:npc3-fly}
The flying block puzzle is NP-complete even with all the following conditions:
(1) the frame is a rectangle of width $2$ without holes,
(2) every line symmetric piece is a rectangle of width 1, and
(3) there is only one non-rectangle piece, which is line symmetric.
\end{theorem}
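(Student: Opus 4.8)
The plan is to prove membership in NP and then NP-hardness by a reduction from \textsc{3-Partition}, reusing the width-$2$ construction behind Theorems~\ref{th:2np} and~\ref{th:npc3} but replacing the unique \emph{asymmetric} block by a unique \emph{line-symmetric} non-rectangle block. The key observation is that a fly is a token jump: a piece may be lifted and set down in any non-overlapping position in a single move. Hence flipping is unnecessary, and a line-symmetric comb can be carried to a translated target exactly when the target placement is a feasible packing. This lets the single non-rectangle piece play the blocking role that the L-tetromino played for the flip over puzzle, while keeping it line symmetric as condition~(3) demands.

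For the reduction, given $a_1,\dots,a_{3m}$ with $\sum_i a_i = mB$ and $B/4<a_i<B/2$, I would build a frame $F$ of width $2$ and length $\Theta(mB)$ and a single line-symmetric non-rectangle piece $P$ shaped as a comb that nearly fills $F$, leaving exactly $m$ vacant slots of size $1\times B$ together with a narrow working lane. The remaining pieces are the $3m$ width-$1$ rectangles $P_i$ of size $1\times a_i$. The distinguished piece is $P$ itself; its initial placement is pushed to one side of $F$ and its goal placement is the horizontal translate of $P$ to the opposite side, while the $P_i$'s initially occupy the complementary region. Because $P$ is symmetric about its vertical centerline, this translate is realizable by a single fly of $P$ as soon as the cells it would cover are free.

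The correctness rests on the same packing dichotomy as in Theorem~\ref{th:2np}. A fly that places $P$ at its goal is feasible precisely when every $P_i$ lies inside one of the $m$ slots of $P$; since the slots have total area $mB=\sum_i a_i$, they must be filled exactly, and $B/4<a_i<B/2$ forces each slot of size $B$ to receive exactly three of the $P_i$'s summing to $B$. Thus a solution exists iff $\{1,\dots,3m\}$ splits into $m$ triples each summing to $B$, i.e., iff the \textsc{3-Partition} instance is a yes-instance. In the forward direction I would first use the working lane as scratch space to reshuffle the $P_i$'s (by a bubble-sort-like schedule, in polynomially many flies) into the slot-aligned positions dictated by a partition, and then fly $P$ to its goal in one move; the backward direction reads a partition off the placement of the $P_i$'s at the moment $P$ is flown to its goal.

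I expect two points to require the most care, and I would treat them as the crux. First, the interaction of \emph{line symmetry} with \emph{rigidity}: I must shape $P$ and dimension $F$ so that the only free region, in every reachable configuration that can precede the final fly, is the slot region plus the single lane, ruling out ``cheating'' flies that place $P_i$'s outside the slots or that relocate $P$ without solving the packing; unlike the asymmetric L-tetromino, a symmetric comb admits an extra rotated placement that must be blocked by the width-$2$ confinement. Second, membership in NP: since the analogous question for the width-$2$ flip over puzzle is left open here, I would argue separately that, in a width-$2$ frame whose pieces are width-$1$ rectangles and one symmetric block, any solvable instance admits a fly sequence of length polynomial in the number of cells. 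This I would obtain by a normalization exploiting that a piece of size $1\times k$ with $k\ge 3$ must lie horizontally in a width-$2$ frame, so that each piece has only polynomially many useful placements and need not be flown more than a polynomial number of times.
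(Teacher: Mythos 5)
Your reduction is essentially the paper's: both reduce \textsc{3-Partition} to a width-$2$ frame containing the $3m$ rectangles $1\times a_i$ and one large line-symmetric comb with $m$ slots of size $1\times B$, with correctness resting on the fact that the $a_i$'s must exactly tile the slots. The one substantive difference is your choice of distinguished piece: you fly the comb itself to a translated goal, whereas the paper keeps the comb essentially in place and distinguishes an additional $1\times mB$ rectangle whose goal is the right end of the frame; the paper's choice sidesteps exactly the issues you flag as the crux (rotated placements of the symmetric comb, and ensuring the comb's translate is blocked in every reachable pre-final configuration), since a $1\times mB$ rectangle can reach the right end only if a contiguous $1\times mB$ region there is vacated, which forces all the $a_i$-pieces into the slots. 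One genuine gap in your write-up is the NP-membership argument: that each piece has polynomially many useful placements does not bound the length of a shortest fly sequence (the joint configuration space is still exponential), so the claimed normalization needs more than what you sketch. The paper instead observes that the hardness construction works even when each piece is moved at most once, which is the restriction under which membership in NP is immediate.
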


\begin{figure}\centering
\includegraphics[width=0.9\linewidth]{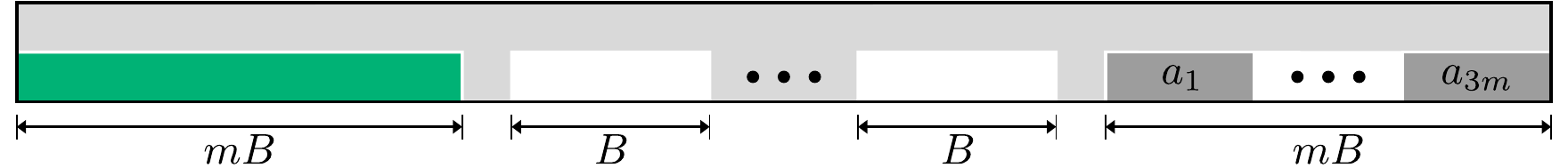}
\caption{Reduction to the flying block puzzle.}
\label{fig:flying2}
\end{figure}

\begin{proof}
The basic idea is the same as the proof of Theorem \ref{th:2np}: We reduce from the \textsc{3-Partition} problem.
The construction of an instance of the flying block puzzle from a given instance of the \textsc{3-Partition} problem is
given in \figurename~\ref{fig:flying2}.
The goal of this puzzle is to move the rectangle piece of size $1\times mB$ to the right side.
To do that, all other pieces of size $1\times a_i$ should be moved to the $m$ vacant spaces of size $1\times B$.
Therefore, the instance of the \textsc{3-Partition} problem is a yes instance if and only if
the constructed instance of the flying block puzzle has a solution.
\qed\end{proof}

We note that the claim in the proof of Theorem \ref{th:npc3-fly} still holds even if
each piece can be moved at most once. In this sense, Theorem \ref{th:npc3-fly} is the counterpart of
both Theorems \ref{th:2np} and \ref{th:npc3}.




The counterpart of Theorem \ref{th:npc1} is as follows:
\begin{theorem}
\label{th:npc1-fly}
The flying block puzzle with constant flies is NP-complete even if it satisfies the following conditions (0), (1), (2), and (3).
(0) Each piece can be moved at most once, 
(1) the frame is a rectangle without holes, and
(2) every piece is a rectangle of size $1\times k$ with $1\le k\le 3$.
\end{theorem}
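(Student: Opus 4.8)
The plan is to fuse two ideas already developed in the paper: the flying-block reformulation from the proof of Theorem~\ref{th:main-fly}, in which the target operation is moving a designated I-tromino rather than flipping an L-tetromino, and the ``constant-flies'' reduction from bounded NCL used in the proof of Theorem~\ref{th:npc1}. Membership in NP is immediate from condition~(0): every piece moves at most once, so any solving sequence has length at most $n$ (the number of pieces), which is a certificate of polynomial size that can be verified in polynomial time by checking feasibility of each intermediate packing and that the specific piece reaches its goal.

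For NP-hardness I would reduce from bounded NCL, which is NP-complete by Theorem~\ref{th:NCL}(2). Reusing the embedding of Section~\ref{sec:pspace}, I place the NCL graph in a rectangular frame tiled by $13\times 13$ cells and realize AND vertices, OR vertices, wires, and corners by the gadgets of \figurename~\ref{fig:gadget}, with each edge orientation encoded by whether its green I-tromino sits ``in'' or ``out.'' Following the proof of Theorem~\ref{th:main-fly}, I discard the asymmetric L-tetromino and instead declare the specific piece to be the I-tromino of the wire or corner gadget corresponding to $e_t$, with goal position one unit out, so that reaching this goal encodes reversing $e_t$. Since the L-tetromino is gone, every remaining piece is a rectangle of size $1\times k$ with $1\le k\le 3$, and condition~(2) holds; moreover, because rectangles are invariant under flipping, the move-only (fly) dynamics and the flip dynamics coincide for these pieces, so the simulation of the gadgets proceeds exactly as in Section~\ref{sec:pspace}.

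To enforce condition~(0) I would carry over both modifications from the proof of Theorem~\ref{th:npc1}: I leave the residual unit squares vacant rather than filling them with monominoes (in the original construction the monominoes were the only pieces moved more than once), and I replace the AND gadget of \figurename~\ref{fig:gadget} with the one in \figurename~\ref{fig:AND}, which functions without generating a spurious empty domino once the monominoes are absent. Each I-tromino then moves at most once, exactly mirroring the bounded-NCL rule that each edge is reversed at most once. The main obstacle is the global move-at-most-once verification: one must confirm, gadget by gadget, that the fly-only dynamics---now lacking any monomino reservoir---never force a rectangle to move twice, and that the modified AND gadget still enforces its constraint under flies alone. Granting this local analysis, the specific I-tromino reaches its goal if and only if $e_t$ can be reversed in the bounded-NCL instance, completing the reduction and hence the proof.
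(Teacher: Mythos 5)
Your proposal matches the paper's own (one-line, outline-only) proof exactly: the paper likewise combines the technique of Theorem~\ref{th:main-fly} (drop the L-tetromino and make the goal moving the designated I-tromino of the $e_t$ gadget) with the bounded-NCL reduction and modified AND gadget of Theorem~\ref{th:npc1}. Your write-up is in fact more detailed than the paper's, and correctly identifies the gadget-level move-at-most-once verification as the part that still needs checking.
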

\begin{proof}(Outline)
We apply the same technique of the proof of Theorem \ref{th:main-fly} to Theorem \ref{th:npc1}.
\qed\end{proof}

Interestingly, the reduction of the proof of Theorem \ref{th:npc2} from the Hamiltonian cycle
cannot be extended to the flying block puzzle since we do not need to
``visit'' every piece to flip. 
Therefore, the computational complexity of
the flying block puzzle with the conditions in Theorem \ref{th:npc2} is open.
We now turn to the flying block puzzles with the frame of width 1.
While this case is trivially ``yes'' in any instance of the flip over puzzle since every piece is a rectangle of width 1,
we show weak NP-hardness of the flying block puzzle if the frame is of width 1 and each piece can be
moved at most once. On the other hand, when we can move the pieces any number of times,
we have nontrivial polynomial time algorithm to solve it.

\begin{theorem}
\label{th:weakNP}
The flying block puzzle with constant flies is weakly NP-complete even if 
each piece can be moved at most once and the frame is a rectangle of width $1$.
\end{theorem}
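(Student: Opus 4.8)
The plan is to reduce from the (weakly NP-complete) \textsc{Partition} problem; membership in NP is immediate, since under the constant-flies restriction any solution uses at most $n+1$ flies, each recorded by a piece together with the integer coordinate of its destination, giving a witness of polynomial length. So it suffices to prove weak NP-hardness.

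First I would set up the instance. Given $a_1,\dots,a_n$ with $\sum_i a_i=2S$ (and, discarding the trivial case, $a_i\le S$ for all $i$), I take $F$ to be the width-$1$ frame of length $6S$ and define, from left to right, an empty \emph{left reservoir} $R=[0,S)$, a \emph{wall} occupying $[S,3S)$ that is tightly packed by the pieces $P_1,\dots,P_n$ (where $P_i$ is a $1\times a_i$ rectangle), an empty \emph{right reservoir} $G=[3S,4S)$, and finally the special piece $P$, a $1\times 2S$ rectangle, placed at $[4S,6S)$. The goal is to fly $P$ into the wall region $[S,3S)$. Note that the total amount of empty space equals $|R|+|G|=2S$, which is exactly the combined length of the wall pieces; this tightness is what drives the reduction. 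Since every piece is a $1\times k$ rectangle it is line symmetric, and rotation is useless in a width-$1$ frame, so the conditions of the theorem hold, and the frame length $6S$ together with the lengths $a_i$ are all encoded in binary, so the construction is polynomial in the binary size of the \textsc{Partition} instance.

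The heart of the argument is the following forcing claim. Because $P$ can be flown only once and its target $[S,3S)$ is initially blocked, $P$ must make its single move \emph{after} the whole wall region has been emptied; up to that moment $P$ still occupies $[4S,6S)$. Hence every wall piece must vacate $[S,3S)$ by its own single fly, and at the time of that fly its destination must avoid both the wall region and the cell $[4S,6S)$ still held by $P$, i.e.\ it must land entirely inside $R\cup G$. As the wall pieces total $2S=|R|+|G|$ and cannot straddle the forbidden regions, they exactly fill $R$ and $G$; the pieces landing in $R$ therefore sum to exactly $S$, giving a \textsc{Partition}. Conversely, from a subset $I$ with $\sum_{i\in I}a_i=S$ I would fly the pieces of $I$ into $R$ and the remaining pieces into $G$ (each reservoir is empty, so these flies are unobstructed and can be taken in any order), emptying the wall region, and then fly $P$ into $[S,3S)$; every piece is moved at most once.

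The step I expect to be the main obstacle is ruling out ``cheating'' solutions, and it is exactly here that the move-at-most-once restriction is essential: since flying lets a piece teleport to any free slot, the hardness cannot come from blocking $P$'s path, only from the tight free-space budget, and one must make sure no wall piece can exploit the cell that $P$ frees when it finally moves. The argument above resolves this --- any wall piece reaching $[4S,6S)$ would have to move there before $P$ leaves (impossible, as $P$ is still there) or move a second time (forbidden) --- which is also precisely why the analogous width-$1$ problem with unbounded flies is tractable rather than hard.
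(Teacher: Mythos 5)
Your proposal is correct and follows essentially the same route as the paper: a reduction from \textsc{Partition} on a width-$1$ frame whose total free space exactly equals the combined length of the small pieces, so that the accounting at the moment just before the special piece's unique fly forces the $a_i$'s to split into two groups of sum exactly half the total. The paper's layout is a bit more compact (a $1\times 4B$ frame with a single vacant reservoir and the goal at the right end, versus your $1\times 6S$ frame with two flanking reservoirs), but the key forcing argument is the same.
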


\begin{figure}
\centering
\includegraphics[width=0.8\linewidth]{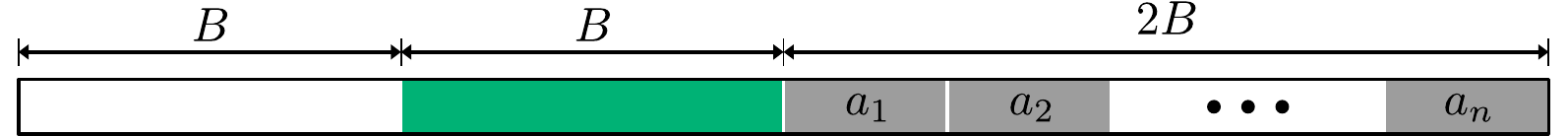}
\caption{The reduction from \textsc{Partition}.}
\label{fig:partition}
\end{figure}

\begin{proof}
We present a polynomial-time reduction from \textsc{Partition}~\cite{GJ79},
which asks, given $n$ positive integers $a_{1}, \dots, a_{n}$ with $\sum_{i=1}^{n} a_{i} = 2B$,
whether there exists a partition of $\{1,\dots,n\}$ into two sets $A$ and $A'$
such that $\sum_{i \in A} a_{i} = \sum_{i \in A'} a_{i} = B$.
This problem is weakly NP-complete. Namely, the problem is NP-complete if each number is given as its binary representation.

For a given instance $\langle a_{1},a_{2},\dots,a_{n}\rangle $ of \textsc{Partition},
we construct an instance of the flying block puzzle as follows.
It consists of a frame $F$ and a set of pieces $\calP = \{P_{1}, P_{2},\dots,P_{n+1}\}$.
The frame $F$ is a $1\times (4B)$ rectangle.
For $1 \le i \le n$, the piece $P_{i}$ is a $1 \times a_{i}$ rectangle.
The special piece $P_{n+1}$ is a $1 \times B$ rectangle.
The initial feasible packing of $\calP$ to $F$ is given in \figurename~\ref{fig:partition}.
The goal is to move $P_{n+1}$ to the right end of $F$.

First assume that $\langle a_{1},a_{2},\dots,a_{n}\rangle $ is a yes-instance of \textsc{Partition}.
That is, there is a partition of $\{1,\dots,n\}$ into $A$ and $A'$ such that $\sum_{i \in A} a_{i} = \sum_{i \in A'} a_{i} = B$.
We first move the pieces $P_{i}$ with $i \in A$ to the leftmost vacant space of size $1\times B$
and then we move and pack the pieces $P_{i}$ with $i \in A'$ to left as possible as we can in the vacant space of size $1\times 2B$.
Now we can move $P_{n+1}$ to the goal position.
Note that each piece moves at most once in this process.

Next assume that there exists a sequence of moves that eventually places $P_{n+1}$ to the goal position.
Let us consider the step right before $P_{n+1}$ moves to the goal position in $F$ for the first time.
We set $A$ to the indices of the pieces placed in the left side of $P_{n+1}$ in $F$ at this moment and $A'$ to $\{1,\dots,n\} \setminus A$.
At this moment, there has to be a vacant space of $1 \times B$ rectangle in the goal position.
Therefore, $\sum_{i \in A'} a_{i} \le B$ holds.
On the other hand, since $P_{n+1}$ is in the initial position now, $\sum_{i \in A} a_{i} \le B$ holds too.
Since the sum of $\sum_{i \in A} a_{i}$ and $\sum_{i \in A'} a_{i}$ is $2B$, both of them have to be exactly $B$.
\qed
\end{proof}


\begin{theorem}
\label{th:strong-poly}
The flying block puzzle can be solved in polynomial time when the frame is a rectangle of width $1$.
\end{theorem}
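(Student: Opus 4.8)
The plan is to reduce the width-$1$ puzzle to a one-dimensional rearrangement problem and to characterize exactly which configurations are reachable. Since the frame is a $1\times L$ rectangle, every piece is a $1\times a_i$ bar placed along the frame, rotation is irrelevant, and a fly simply relocates a piece to any currently vacant contiguous segment whose length equals that of the piece. Write $s=L-\sum_i a_i$ for the total amount of free space; this quantity is invariant. I would call a piece \emph{large} if its length exceeds $s$ and \emph{small} otherwise, and I would first prove a structural lemma describing the reachable configurations, from which a polynomial-time test falls out.

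\textbf{The invariant.} First I would show that the left-to-right order of the large pieces can never change. In any feasible packing the gaps sum to $s$, so no vacant contiguous segment is ever longer than $s$; hence a large piece (length $>s$) can only be put back into a segment that overlaps the cells it just vacated, i.e.\ it can merely slide within the free space immediately adjacent to it and can never be relocated across another piece. A move of a small piece obviously does not affect the order of the large ones. Tracking the number of out-of-order pairs of large pieces, every fly leaves it unchanged, so the relative order $B_1,\dots,B_k$ of the large pieces is invariant.

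\textbf{Reachability.} Next I would prove the converse: every feasible packing whose large pieces occur in the same order $B_1,\dots,B_k$ is reachable. This constructive step is where I expect the real work. The idea is that small pieces are fully mobile---each has length $\le s$, so after consolidating the free space into a single gap (which can be positioned anywhere by sliding large pieces and leap-frogging small pieces over them) any small piece can jump to any desired slot---while each large piece can be transported to any target position by repeatedly sliding it into adjacent free space and shuffling the encountered small pieces behind it, all without disturbing the large-piece order. Assembling the target configuration large piece by large piece from left to right and then dropping the small pieces into the remaining gaps realizes it. The delicate point is verifying that enough room to maneuver always exists when $s$ is small; this needs a careful ordering of the moves but no new idea, and it is the main obstacle of the proof.

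\textbf{The decision procedure.} Finally I would turn the characterization into an algorithm. The goal is achievable iff some reachable configuration has the special piece $P$ occupying the target interval $[g,g+|P|-1]$, equivalently (by the lemma) some configuration with the large pieces in their original order does so. At the instant $P$ sits at the target, every other piece lies entirely in the left interval of length $\ell_L=g-1$ or entirely in the right interval of length $\ell_R=L-g-|P|+1$. Because the large pieces keep their order and the target interval separates the two sides, the large pieces on the left must form a \emph{prefix} $B_1,\dots,B_j$ and those on the right the complementary \emph{suffix}; this replaces the subset-sum choice---which is what makes the move-once variant of Theorem~\ref{th:weakNP} NP-hard---by a single scan over the $k+1$ split points. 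If $P$ is itself large the split is forced by the rank of $P$ among the large pieces, and if $P$ is small I would try all $O(n)$ splits. For each candidate I would check that the prefix fits in $\ell_L$ and the suffix in $\ell_R$, and a short greedy argument (each small piece has length $\le s$ while the total leftover slack equals $s$) shows the small pieces then always fit into the remaining space. Checking all splits is clearly polynomial, which proves the theorem.
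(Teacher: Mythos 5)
Your proposal is correct and follows essentially the same route as the paper's proof: the same short/long dichotomy relative to the total slack $s$, the same invariant that long pieces cannot change their relative order, the same enumeration of $O(n)$ insertion points for a short special piece (forced position for a long one), and the same greedy argument showing the short pieces always fit once the long pieces fit on each side. No substantive differences to report.
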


\begin{proof}
Let $F$ be the frame of size $1\times f$, and $\{Q,P_1,\dots,P_n\}$ be the set of pieces,
where $Q$ is the special piece to be moved to the given goal position.
Let $a_{i}$ denote the length of $P_{i}$ for each $i$.
We assume without loss of generality that $a_{1} \le a_{2} \le \cdots \le a_{n}$.
Let $a_{0}$ be the length of $Q$.
Now let $S=f-\sum_{i=0}^n a_i$, which gives us the total vacant space in $F$.
Using $S$, we divide the set of pieces into two groups.
A piece $P_i$ (or $Q$) is \emph{short} if $a_i\le S$, and \emph{long} if $a_i>S$.
We assume that the pieces $P_1,\dots,P_{n'}$ are short, and
$P_{n'+1},\dots,P_n$ are long. Let $\mathcal{X}$ be the set of long pieces.
(The set $\mathcal{X}$ may contain $Q$.)

Observe that two long pieces cannot exchange their relative positions in $F$
since the working space $S$ is too small to swap them.
Therefore, the ordering in $\mathcal{X}$ cannot be changed by any flies of pieces.
On the other hand, when at least one of them is short,
two pieces can exchange their relative positions in $F$ in linear number of flies
without changing the relative positions of the other pieces.

Now we focus on the special piece $Q$.
If $Q$ is in $\mathcal{X}$, the relative position of $Q$ in $\mathcal{X}$ is fixed.
On the other hand, if $Q$ is short, the number of possible orderings
of $\mathcal{X} \cup \{Q\}$ is $\msize{\mathcal{X}}+1$.
In the latter case, the algorithm checks all of these $\msize{\mathcal{X}}+1$ cases.
Hereafter, we assume that the ordering of $\mathcal{X}\cup \{Q\}$ is fixed.
(If $Q$ is short and not in that position in $\mathcal{X}\cup \{Q\}$,
then we first move $Q$ to that relative position.)

When we put $Q$ at the goal position, the frame $F$ is divided into
two parts by $Q$.
let $f_L$ and $f_R$ be the length of the left and right parts of $F$, respectively.
In the ordering of $\mathcal{X}\cup \{Q\}$, let $\mathcal{X}_L$ and
$\mathcal{X}_R$ be the sets of long pieces at the left and right of $Q$ in $F$, respectively.
Let $L=f_L-\sum_{P_i\in \mathcal{X}_L} a_i$ and $R=f_R-\sum_{P_i\in \mathcal{X}_R}a_i$,
which give us the total spaces we can use for placing short pieces in
the left and right sides in $F$ divided by $Q$ at the goal position.

If we have already $L<0$ or $R<0$, the answer is ``no''.
On the other hand, if $L\ge 0$ and $R\ge 0$, we can decide that the
answer is ``yes'' as follows.

Now, we have to construct a subset $A\subseteq\{1,2,\dots,n'\}$
so that $\sum_{i\in A} a_i\le L$ and $\sum_{i\not\in A} a_i\le R$.
This set $A$ can always be constructed as follows.
We greedily add as many elements $P_i$ in $\{P_1,P_2,\ldots,P_{n'}\}$
as possible to $A$ while keeping $\sum_{i\in A} a_i \le L$.
Then, we have $L-\sum_{i\in A} a_i<S$ since $a_i \le S$ for every short piece $P_i$.
Therefore, we have $L-S<\sum_{i\in A}a_i\le L$.
On the other hand, $\sum_{i\in A} a_i+ \sum_{i\not\in A} a_i=\sum_{1\le i\le n'}a_i = L+R-S$.
Thus we obtain $\sum_{i\not\in A}a_i=L+R-S-\sum_{i\in A} a_i<L+R-S-(L-S)=R$.
\qed
\end{proof}

\section{Concluding remarks}

In this paper, we investigate the computational complexities of the jumping block puzzles which
form the token-jumping counterpart of the sliding block puzzles in the context of reconfiguration problems.
The other well-studied model in the field of combinatorial reconfiguration would allow
``removals and additions'' of blocks, which would be future work.
Even in the jumping block puzzles, we still have many variants.

One natural variant in the context of the combinatorial reconfiguration is that
the input consists of the initial feasible packing and the target feasible packing.
In this reconfiguration problem, we have two observations (proofs are omitted):
\begin{obs}
The reconfiguration problem of the jumping block puzzle is tractable if the frame $F$ is of width $1$.
\end{obs}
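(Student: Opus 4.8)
The plan is to reduce the reconfiguration problem on a width-$1$ frame to a purely combinatorial comparison, exploiting the same structural dichotomy used in the proof of Theorem~\ref{th:strong-poly}. First I would record that in a frame of width $1$ every piece is forced to be a $1\times k$ rectangle, so a feasible packing is completely described by the left-to-right ordering of the labelled pieces, the widths of the gaps separating them, and, for each piece, its orientation (which end points left and whether it is front- or back-side up). A single fly (or flip) removes one piece, freeing its cells, and reinserts it into any place where it now fits; moreover every such move is reversible, so mutual reachability is an equivalence relation and it suffices to pin down the invariant that labels its classes.

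The first key step is to isolate that invariant. Writing $S=f-\sum_i a_i$ for the total vacant length and calling a piece \emph{long} when its length exceeds $S$ and \emph{short} otherwise (exactly as in Theorem~\ref{th:strong-poly}), I would argue that a single move never changes the left-to-right order of the long pieces. Indeed, a move that removes a long piece $Z$ opens a contiguous gap of length $|Z|$ only at $Z$'s former location; any other slot between two long pieces contains free space that is part of the original $S$, hence of length at most $S<|Z|$, so $Z$ can only be reinserted into its original slot relative to the other long pieces. Iterating, the left-to-right subsequence of long pieces is preserved by every move, which gives necessity: if the initial packing reaches the target, their long-piece subsequences must coincide.

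Next I would establish sufficiency: if the long-piece subsequences of the initial and target packings agree, the target is reachable. Here I would invoke the adjacent-exchange operation from Theorem~\ref{th:strong-poly} --- any adjacent pair in which at least one piece is short can be swapped in a linear number of flies without disturbing the rest --- and run a bubble-sort-style procedure to bring the pieces into the target ordering, which is possible precisely because the only forbidden permutations are those reordering long pieces. Once the ordering matches, I would fix the exact gap widths by a left-to-right sweep, sliding pieces to absorb or release free space while using $S$ as working room; finally each piece's side and leading end are set by an in-place flip or $180^\circ$ rotation, which on a width-$1$ frame is always a feasible single move. Since orientations and gap positions impose no further constraint, agreement of the long-piece subsequence is both necessary and sufficient, and the algorithm merely extracts the ordered list of long pieces from each input packing and tests equality, taking time linear in the input after classifying pieces by the threshold $S$; this yields tractability.

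I expect the main obstacle to be the sufficiency argument rather than necessity: one must verify that the bubble-sort and gap-adjustment sweep always run to completion --- that the working space $S$ is never exhausted at an intermediate configuration and that interleaving the short pieces among the fixed long pieces can realize every target gap profile (including the degenerate case $S=0$, where nothing moves and the long subsequence is the whole ordering). Once the long/short dichotomy is set up, the invariant direction is essentially immediate, so the care lies entirely in showing that matching the long-piece order suffices to reach the \emph{exact} target packing, not just the correct relative order.
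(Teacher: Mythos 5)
The paper explicitly omits a proof of this observation, so the only available comparison is with the machinery it develops for Theorem~\ref{th:strong-poly}; your argument is correct and is precisely the natural completion of that machinery. You reuse the same long/short dichotomy via the threshold $S$, the same invariance of the relative order of long pieces, and the same adjacent-exchange primitive, upgrading them to a full characterization (reachability iff the long-piece subsequences coincide), and your handling of the gap-profile sweep, in-place orientation changes, and the degenerate case $S=0$ closes the points one would need to check.
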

\begin{obs}
The reconfiguration problem of the jumping block puzzle is tractable if all pieces are rectangles of size $1\times 2$.
\end{obs}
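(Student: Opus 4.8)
The plan is to reduce the reconfiguration question for dominoes to a purely combinatorial reachability test on the placement of identical tiles, and then to decide that test in polynomial time. First I would dispose of the front/back distinction: a $1\times 2$ domino is line symmetric, so by the in-place flip observation noted in Section~\ref{sec:intro} any domino may be turned over without being moved. Consequently the side (front or back) of each piece can be corrected at the very end and never constrains the intermediate configurations, so we may forget sides entirely. Moreover all pieces are mutually congruent, so a feasible packing is determined, up to relabeling of the indistinguishable dominoes, by the set of cells it leaves vacant together with the partition of the remaining cells into dominoes. Writing $v=\msize{F}-2n$ for the (fixed) number of vacant unit squares, the task becomes: can the initial domino placement be transformed into the target one by jumps, where a jump removes one domino (freeing its two cells) and reinserts it onto any two currently vacant adjacent cells?

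Next I would analyze reachability through the motion of the vacant cells, which act as mobile ``holes.'' If $v=0$ the packing is rigid: lifting any domino leaves exactly its own two cells free, forcing it back, so the answer is ``yes'' iff the initial and target placements already coincide. If $v\ge 1$, a hole adjacent to a domino lets that domino slide or reorient by one step, exactly as in a $15$-puzzle, and chaining such steps routes the holes through the frame. The key structural claim I would establish is a connectivity lemma: inside any part of $F$ that is not a width-$1$ corridor, a single hole suffices to move the holes to any prescribed set of cells and to realize any placement of the (identical) dominoes on the remaining cells. In a width-$1$ corridor no two dominoes can pass each other, but since the pieces are indistinguishable this imposes no constraint on the achievable picture (and, in the labeled reading, only an easily-checked order invariant), matching the order-preservation phenomenon already used for long pieces in the proof of Theorem~\ref{th:strong-poly}.

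With the lemma in hand the decision procedure is immediate: decompose $F$ into its width-$1$ corridors and its ``fat'' regions, verify the forced order invariants across the corridors, and otherwise declare the instance solvable whenever $v\ge 1$ (and solvable for $v=0$ exactly when the two placements coincide); all of this is computable in time polynomial in $\msize{F}$. I expect the main obstacle to be making the connectivity lemma precise and, in particular, controlling the interfaces where a narrow corridor meets a fat region: one must show that a hole can always be delivered from the fat region into the corridor to unlock a blocked domino, and conversely that no spurious invariant beyond the corridor orderings survives. Once this local maneuvering is verified, the remaining argument is routine bookkeeping.
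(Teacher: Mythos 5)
The paper states this observation with its proof omitted, so there is nothing to compare against line by line; judging your argument on its own terms, it has a genuine gap. The entire content of the observation is concentrated in your ``connectivity lemma'' (a single hole in a non-corridor region suffices to realize any placement of the dominoes), and you explicitly leave it unproved while acknowledging it is the main obstacle. That lemma is exactly where such arguments go wrong: 15-puzzle-style reachability claims have nontrivial exceptional cases (parity and small-region obstructions), a general polyomino frame need not decompose cleanly into ``width-1 corridors'' and ``fat regions,'' and you give no argument that the corridor order invariants are the \emph{only} obstructions. Worse, your second paragraph analyzes the moves as \emph{slides} (``a hole adjacent to a domino lets that domino slide or reorient by one step''), whereas the model is a \emph{jump}: a lifted domino may be reinserted onto any two adjacent vacant cells anywhere in $F$. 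This makes your corridor analysis off-model (a domino can simply jump out of a corridor, so ``no two dominoes can pass each other'' is false as stated), and it means the reachability relation you are studying is not the one the observation is about.

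The jump semantics also points to a much cleaner argument that avoids all of this geometry. A placement of $n$ indistinguishable $1\times 2$ pieces in $F$ is precisely a matching of size $n$ in the grid graph whose vertices are the unit cells of $F$ and whose edges join adjacent cells, and a jump of a domino is precisely the exchange of one matching edge for another (token jumping on matchings). Reconfiguration of matchings under this rule is known to be solvable in polynomial time --- this is one of the results of \cite{IDHPSUU}, which the paper already cites for \textsc{matching} reconfiguration --- so the observation follows immediately once the front/back issue is dismissed by the in-place flip of symmetric pieces, as you correctly do. (Congruent pieces are not distinguished in the paper's default setting, so no permutation or ordering analysis is needed.) If you want a self-contained proof rather than a citation, the structure to analyze is the symmetric difference of the two matchings, not the motion of holes through the frame.
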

Extension of them to general cases, e.g., blocks of size $1 \times 3$, seems to be interesting (cf.~\cite{HINSU2012}).

\begin{figure}
\centering
\includegraphics[width=0.2\linewidth]{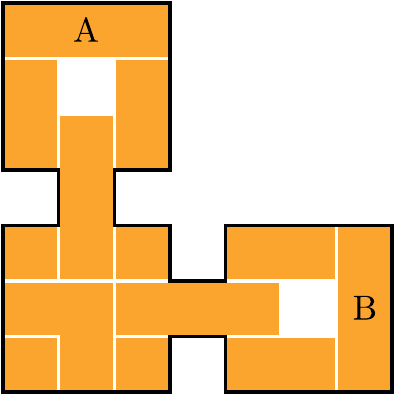}
\caption{A ``simple'' example of the flying block puzzle designed by Hideyuki Ueyama in \cite{Akiyama} (with permissions):
   It requires 256 steps to exchange A and B.}
\label{fig:usample}
\end{figure}

We may allow the frame to have holes (or fixed obstacles).
Moreover, we may distinguish congruent pieces.
One interesting example can be found in \cite{Akiyama}:
the puzzle in \figurename~\ref{fig:usample} was designed by Hideyuki Ueyama.
It consists of 12 rectangle pieces and one L-tromino.
There are two vacant unit squares. The goal of this puzzle is to exchange
two I-trominoes with labels A and B. At a glance, it seems to be impossible.
However, they can be exchanged in 256 steps.
In such a case, it seems that it requires exponential steps if we have a few vacant unit squares.
Intuitively, the puzzle is likely to become easier if we have many vacant unit squares.
(In fact, some concrete examples can be found in \cite{Akiyama}.)
It would be interesting to ask whether there is an algorithm
that runs faster when $k$ is larger, where $k$ is the number of vacant unit squares.

\section*{Acknowledgments}
A part of this research is supported by JSPS KAKENHI Grant Numbers
JP21K11752, 
JP20K11673, 
JP20H05964, 
JP20H05793, 
JP18K11169, 
JP18K11168, 
JP18H04091, 
and
JP17H06287. 

\bibliographystyle{splncs04}
\bibliography{main}

\end{document}